\newcommand{\algrule}[1][1pt]{\par\vskip.05\baselineskip\hrule height #1\par\vskip.15\baselineskip}
\newtheorem{theorem}{Theorem}
\newtheorem{new_th}{Theorem}
\DeclarePairedDelimiter\ket{|}{\rangle} 
\DeclarePairedDelimiter\bra{\langle}{|}
\newcommand{\braket}[2]{\langle #1 | #2 \rangle}
\newcommand{\ketbra}[2]{| #1 \rangle \langle #2 | }
\newcommand\tinyvarhexagon{\vcenter{\hbox{\scalebox{0.75}{$\varhexagon$}}}}
\begin{document}

\preprint{APS/123-QED}

\title{Dynamical Magic Transitions in Monitored Clifford\texorpdfstring{$\bm{+T}$}{} Circuits}%

\author{Mircea Bejan}
 \affiliation{T.C.M. Group, Cavendish Laboratory, University of Cambridge, J.J. Thomson Avenue, Cambridge, CB3 0HE, UK\looseness=-1}%
\author{Campbell McLauchlan}%
\affiliation{DAMTP, University of Cambridge, Wilberforce Road, Cambridge, CB3 0WA, UK}%
\author{Benjamin B\'eri}
\affiliation{T.C.M. Group, Cavendish Laboratory, University of Cambridge, J.J. Thomson Avenue, Cambridge, CB3 0HE, UK\looseness=-1}%
\affiliation{DAMTP, University of Cambridge, Wilberforce Road, Cambridge, CB3 0WA, UK}%

\begin{abstract}
The classical simulation of highly-entangling quantum dynamics is conjectured to be generically hard.
Thus, recently discovered measurement-induced transitions between highly-entangling and low-entanglement dynamics are phase transitions in classical simulability.
Here, we study simulability transitions beyond entanglement: noting that some highly-entangling dynamics (e.g., integrable systems or Clifford circuits) are easy to classically simulate, thus requiring
``magic"---a subtle form of quantum resource---to achieve computational hardness, we ask how the dynamics of magic competes with measurements.  
We study the resulting ``dynamical magic transitions" focusing on random monitored Clifford circuits doped by $T$ gates (injecting magic). 
We identify dynamical ``stabilizer-purification"---the collapse of a superposition of stabilizer states by measurements---as the mechanism driving this 
transition.
We find cases where transitions in magic and entanglement coincide, but also others with a magic and simulability transition in a highly (volume-law) entangled phase.
In establishing our results, we use Pauli-based computation, a scheme distilling the quantum essence of the dynamics to a magic state register subject to mutually commuting measurements. 
We link stabilizer-purification to ``magic fragmentation" wherein these measurements separate into disjoint, $\mathcal{O}(1)$-weight blocks, and relate this 
to the spread of magic in the original circuit becoming arrested. 
\end{abstract}
\maketitle

\section{Introduction}
The efficient simulation of generic quantum  systems is conjectured to require a quantum  computer~\cite{feynman1982simulating}. 
However, the boundary between what can  and cannot be efficiently simulated on a classical computer is a subtle issue~\cite{vidal2003efficient,vidal2004efficient,markov2008sim,schuch2008entropy,dalzell2022anticoncentration,napp2022shallow,wahl2022sim, Terhal2004ComplexitySampling,Feng2022SycamoreSampling,Boixo2018QuantumSupremacy,GoogleQuantumSupremacy2019}. 
The exponential dimension of the Hilbert space might naively suggest that an efficient simulation algorithm would be impossible in all but the most trivial of cases. 
Many recent experimental and theoretical efforts have confirmed the ability of random quantum circuits to generate output distributions that are exponentially complex to replicate classically~\cite{GoogleQuantumSupremacy2019,Boixo2018QuantumSupremacy,Quantum_supremacy2021,QRS23,Quantum_supremacy_boundary2023}.
Remarkably however, there exist examples of quantum dynamics that permit efficient classical simulation.
For example, it is possible to use matrix product states (MPSs) for efficiently simulating states with low entanglement~\cite{vidal2003efficient, vidal2004efficient}, Gaussian fermionic states for free-fermion dynamics~\cite{divincenzo_fermion_circuits2002,bravyi2004gaussian}, and the stabilizer formalism for Clifford dynamics~\cite{gottesman1998heisenberg, aaronson2004improved}.  
Delineating the boundary between quantum systems that do or do not permit efficient classical simulation can provide a greater understanding of the transition between quantum and classical dynamics and also expose the regimes in which future quantum computers could display an advantage. 
Entanglement is a resource for quantum advantage.
The existence of sharp transitions in the amount of entanglement generated by a quantum circuit~\cite{aharonov2000quantum} suggests the existence of a similar transition in classical simulation complexity. 
One mechanism for such entanglement transitions is via mid-circuit ``monitoring" measurements, mimicking the coupling of the system to an environment~\cite{li2018zeno, li2019mipt, choi2020qecc, jian2020mipt, ippoliti2021mom, fisher2023rqc, skinner2019mipt}.
In the highly-entangled phase, ``volume-law" scaling of the entanglement entropy (EE) is generated by the unitary gates in the circuit.
In the low-entanglement phase, randomly introduced monitor measurements suppress entanglement, resulting in an ``area-law" scaling.

The computational complexity of classically simulating this dynamics using MPSs is directly linked to EE~\cite{skinner2019mipt, bao2020tpt,Cirac2022TransitionsContinuousDynamics,Green2021TransitionsOpenSystems}. 
In the area-law regime, MPSs allow one to keep track of the system's state via polynomial-time (in the system size) classical computation, as opposed to exponential-time computations in the volume-law regime.
However, Clifford dynamics, which can be highly-entangling, can still be classically simulated efficiently.
This suggests there may be simulability transitions stemming from a quantum resource other than entanglement.
An important such resource is ``magic"~\cite{magic0, bravyi2005universal, magic2, Bravyi_PBC2016, magic1, magic3, magic4, magic5, magic6, magic7, Leone2021quantumchaosis}. This, broadly, quantifies how far a state is from the orbit of the Clifford group, if starting from a computational basis state.
Despite its importance, little is known about any magic-based simulability phase transition in monitored random quantum circuits.
Evidence exists for transitions in magic in certain random quantum circuits~\cite{magic8pt2, magic8pt1, magic9, singleT_univEE, SO_eCPX, magic_Pauli_rank},
and  magic has been studied in many-body states~\cite{CFT_magic,Ellison2021symmetryprotected,Magic_TFIM,Magic_MPS,Magic_MPS2,Magic_frust,Magic_thermalize,Magic_quench,Magic_chaos,Magic_CFT2,Magic_RK, Magic_multifractal},
but the connection to simulability transitions, or the relation to entanglement transitions is not yet understood.
\begin{figure}[t]
    \centering
        \includegraphics[width=8.2cm]{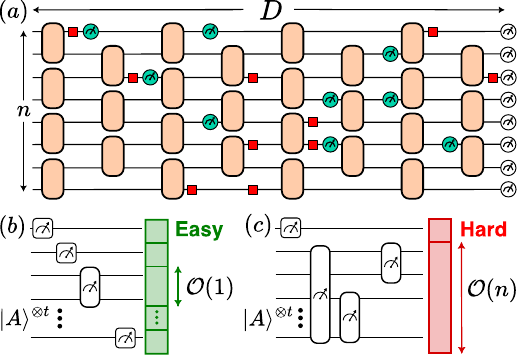}%
    \caption{
                $(a)$: 
                Monitored random Clifford$+T$ circuit with $n$ qubits and depth $D$. 
                Random 2-qubit Clifford gates (orange rectangles) form a brickwork architecture. 
                $T$ gates (red squares) and monitoring $Z$-measurements
                (green circles) are applied to individual qubits between layers of Cliffords with probabilities $q$ and $p$, respectively. 
                The circuit ends with a complete set of computational basis (i.e., $Z$-) measurements. 
                The aim is to sample from the output probability distribution.
                Panels $(b)$ and $(c)$ illustrate PBC: a $t$-qubit magic state register subject to mutually commuting Pauli measurements. 
                Each of these PBCs are equivalent to some circuit as in $(a)$, with $t$ $T$ gates.
                $(b)$: 
                In an easy phase, the measurements can be done in parallel on size $\mathcal{O}(1)$ blocks of magic states: %
                the magic remains fragmented after the measurements.
                $(c)$: 
                In a hard phase, most measurements belong to a size $\mathcal{O}(n)$ (for $t \propto n$) block: %
                the magic is diffused by the measurements.
                }
    \label{fig:circuit}
\end{figure}
In our work, we demonstrate a simulability phase transition driven by the dynamics of magic in the circuit, and show that this transition is related to but separate from the entanglement transition.
We consider a (1+1)D model involving random 2-qubit Clifford gates, non-Clifford $T$ gates, and single-qubit monitor $Z$-measurements, cf. Fig.~\ref{fig:circuit}($a$). 
Clifford gates along with the $T$ gate form a universal gate set for quantum computation~\cite{bravyi2005universal}. 
Hence, our model interpolates between classically simulable and universal circuits, controlled both by the level of $T$ gate doping and the rate of measurements.
This provides a toy model for the dynamics of quantum computers and other quantum systems which can be used to probe the barrier between high and low complexity regimes.

One can classically simulate a quantum circuit in time scaling exponentially only with the number of non-Clifford gates (injecting magic) and not with the number of qubits~\cite{Bravyi_PBC2016}.
However, as we shall note, for this exponential scaling (and hence computational hardness) to set in, locally injected magic must be able to spread in the system. 
To assess this in our Clifford+$T$ circuits, we use Pauli-based computation (PBC)~\cite{Bravyi_PBC2016}. 
This distills a Clifford circuit with $t$ $T$ gates into a $t$-qubit magic state register subject to mutually commuting measurements; this strips away all the classically efficiently simulable aspects, thus capturing the dynamics' true quantum essence.

We take $t$ to scale at least as the number $n$ of qubits; this allows for regimes with exp$(n)$ runtime for classically simulating PBC~\cite{Bravyi_PBC2016} (hard phase).
However, as we shall show, measurements may reduce this runtime to poly$(n)$ (easy phase), with a ``magic transition" at a critical monitoring rate.
In the easy phase, measurements fragment the magic state register into pieces whose size does not scale with $n$, cf. Fig.~\ref{fig:circuit}$(b)$. 
This fragmentation can be linked to the spread of magic in the original circuit (which we capture with a ``causal cone of magic")
becoming arrested by a mechanism we dub dynamical ``stabilizer purification", where sufficiently frequent measurements keep projecting the system into a stabilizer state. 
We show that entanglement and magic transitions may coincide, but also show that the latter can, strikingly, also occur in a volume-law phase. 
This shows that changes in the dynamics of magic alone, without a change in entanglement, can drive simulability transitions.
\section{Summary of the main results \label{sec:summary}}
Before providing our detailed analysis, we summarize our main results and the structure of the paper.
\subsection{Simulability transition, stabilizer-purification and magic fragmentation}\label{sec:SP_MF}
Here, we outline dynamical magic transitions and describe stabilizer purification and magic fragmentation.
The transition is introduced in Sec.~\ref{sec:CPXtrans} in detail and its mechanism is thoroughly discussed in Sec.~\ref{sec:SPandCartoon}. 

\setlength{\skip\footins}{0.5cm}
We study a model of random Clifford gates interspersed with random monitoring $Z$-measuremenents and non-Clifford $T$ gates, shown in Fig.~\ref{fig:circuit}($a$). 
Such circuits are generically hard to classically simulate, which we here define as weak simulation:\footnote{PBC can also perform ``strong" simulation, i.e., calculate a specific outcome's probability.} sampling from the distribution of computational basis measurements in their final state.
Nevertheless, we find that a certain ``runtime proxy" for classically simulating these circuits via PBC (see Sec.~\ref{sec:model_sim}) undergoes a transition at a critical monitoring rate: 
below this rate the circuits are hard to classically simulate using  PBC, while above it they become easy to simulate. 

PBC produces a simplified circuit that is equivalent, up to efficient classical processing, to the original circuit.
The PBC circuit acts on a ``magic state register" (MSR), see Fig.~\ref{fig:circuit}$(b)$,$(c)$, with each magic state stemming from implementing a $T$ gate in the original circuit. 
PBC then involves performing a series of mutually commuting measurements on the MSR. 
To classically simulate MSR measurements, one decomposes the initial product of magic states into a superposition of stabilizer states and simulates a Clifford circuit for each stabilizer state. 
The number of stabilizer states entering the superposition is called the \textit{stabilizer rank} of the state; for $t$ magic states this is believed to scale as $2^{\alpha t}$, where $\alpha>0$ is a constant~\cite{Bravyi_PBC2016, bravyi2016improved, Qassim2021improvedupperbounds}. 
We choose the parameters of our model such that the expected value of $t$---and hence the size of the MSR---scales as $\mathrm{poly}(n)$.

The supports of the MSR measurements are crucial for the definition of the simulability proxy for the magic transition.
Without monitoring measurements, the magic injected by the $T$ gates generically spreads and this leads to MSR measurement operators developing support on a large fraction of the MSR [Fig.~\ref{fig:circuit}$(c)$, Sec.~\ref{sec:CpxProxy}, \ref{sec:CCM}, and App.~\ref{app:CCM}].
In this case, simulation is hard because one has to consider at least $2^{\alpha \text{poly}(n)}$ stabilizer states.
Conversely, mechanisms that make the MSR measurements local may allow for simulating local ($n$-independent-sized) blocks of the MSR separately [Fig.~\ref{fig:circuit}$(b)$], leading to  easy simulation, since there are $\mathrm{poly}(n)$ blocks altogether. 
In this case, when the MSR measurements can be separated into disjoint $\mathcal{O}(1)$-sized MSR blocks we say \textit{magic fragmentation} (MF) occurs (or, more precisely, persists, since the MSR started out as fragmented before the measurements).
The MSR measurements thus either fragment or diffuse the initial magic in the MSR, and the presence of MF suggests that the spread of the magic inserted by the corresponding $T$ gates became arrested in the original circuit. 
A key mechanism for the dynamics of magic is the competition between $T$ gates and measurements. 
In the original circuit, $T$ gates tend to increase the stabilizer rank of the time-evolved state, while monitors tend to decrease it by projecting single qubits onto a stabilizer state. 
From the PBC perspective, the $T$ gates increase the MSR, while monitors tend to both reduce the MSR stabilizer rank and localize the supports of MSR measurements. 
We present numerical evidence for the existence of and transition between the corresponding two MSR regimes in Sec.~\ref{sec:CPXtrans}.

Measurements may eventually lead to the collapse of a complex superposition of stabilizer states to a single one. 
We call this \textit{dynamical stabilizer-purification} (SP) owing to its similarity to dynamical purification~\cite{gullans2020purification}.
SP defines a mechanism for arresting the spread of magic, and in PBC, we shall indeed show  that it can cause MF, and thus, can drive the magic transition.
In Sec.~\ref{sec:SPandCartoon}, we exemplify this via a simplified model where $T$ gates are temporally-separated enough such that the monitors after a $T$ gate project onto a stabilizer state before the next $T$ gate occurs; this is an example of SP.
Converting this to PBC, we find that the magic states for the successive $T$ gates are each subjected to their own single-qubit measurement. 
Thus, in this case, we see how SP in the original circuit leads to MF in PBC.
Therefore, it is appealing to search for regimes where the assumptions of the simplified model are met since these regimes would reveal easy phases.
We outline two such regimes in the following subsections.
\subsection{Uncorrelated monitoring \label{sec:AgMon}}
Here, we summarize how the SP probability is set by the EE in a model where monitors are sampled independently from $T$ gates. 
We overview the implications for a simulability transition
and provide some regimes where the magic and entanglement transitions coincide or differ.

As we shall show, the entanglement can set the probability to stabilizer-purify:
For stabilizer states, volume- or area-law scaling of the entanglement entropy implies most stabilizer generators are delocalized or localized, respectively (cf. App.~\ref{app:bulkMons}). 
Using this, we shall show that a single $T$ gate is stabilizer-purified with high probability in $\exp (n)$ time or $\mathcal{O}(1)$ time in the volume- and area-law phase, respectively (see Sec.~\ref{sec:SPTime_purifier} and App.~\ref{app:bulkMons}).

This will allow us to show that the entanglement and the magic transitions can coincide.
We focus on a regime with one $T$ gate occurring in every $\mathrm{poly}(n)$ time-steps (i.e., circuit layers). 
In this case, in the area-law phase, as the magic from each $T$ gate stabilizer-purifies in $\mathcal{O}(1)$ time, the $T$ gates are sufficiently far apart to enable SP one $T$ gate at a time. 
The area law thus implies easy PBC simulation. 
In contrast, in the volume-law phase, there are not enough monitors to SP: $T$ gates occur every $\sim \mathrm{poly}(n)$ time-steps but each requires $\exp (n)$ time to stabilizer-purify; hence, we expect a hard PBC phase since the stabilizer rank in the original circuit blows up, magic spreads (as we argue in Sec.~\ref{sec:SPTgate} and App.~\ref{app:CCM}), and the MSR measurements are delocalized. 
We find remarkable agreement between these expectations (see Sec.~\ref{sec:SPtime_implications}) and numerical simulations of the runtime proxy (see Sec.~\ref{sec:CPXtrans}), which confirm the link between entanglement and magic transitions in this regime. 

However, there are other regimes where the transitions are distinct. 
As we shall show, the area law can also set in without PBC becoming easy.
Consider the regime with $\mathcal{O}(n)$ $T$ gates injected into the circuit at every time step. 
This introduces $T$ gates at a rate higher than that at which each of them is stabilizer-purified, making SP exponentially unlikely. 
Thus, the stabilizer rank in the original circuit blows up and the MSR measurements are delocalized, so the PBC simulation is hard, regardless of the EE. 
Using a mapping to percolation (cf. Sec.~\ref{sec:perc}, App.~\ref{app:SpacetimePart}), we show that the simulation by PBC does eventually become easy, but this is because the (final-state relevant parts of the) circuit itself gets effectively disconnected, and this happens well after area-law EE sets in~\cite{skinner2019mipt}. 
We perform numerical simulations of the runtime proxy that agree well with these expectations (see Sec.~\ref{sec:qCPX}).
While these results already show that the relation between entanglement and magic transitions can be subtle, they might suggest that classical simulability in monitored Clifford+$T$ circuits depends only on entanglement: in the regimes discussed, PBC was hard in the volume-law phase and once an area law sets in, one can use MPSs (regardless of the hardness or ease of PBC) for efficient classical simulations. 
As we next discuss, however, magic transitions can also happen in the volume-law phase.

\subsection{\texorpdfstring{$\bm{T}$}{}-correlated monitoring \label{sec:InfMon}}
To push the magic transition into the volume-law phase, we introduce correlations between $T$ gates and monitoring measurements. 
It is amusing to interpret this scenario as there being a monitoring observer whose aim is to make classical simulation as easy as possible by measuring a fraction of the qubits
per circuit layer, and who may have some (potentially limited) knowledge of the locations of the $T$ gates.
Their best strategy is to attempt to measure immediately after $T$ gates. 
If we consider the extreme case with the observer having as many monitors as $T$ gates %
and perfect knowledge of where $T$ gates occurred, 
then the state is %
stabilizer-purified after each layer.
Indeed, this is an ``easy" point in parameter space regardless of the amount of entanglement and even for $\mathcal{O}(n)$ $T$ gates injected per layer. %
Remarkably, %
an entire easy phase can emerge within a volume-law entangled phase.

To turn this into a simulability transition within the volume-law phase, we can use as a control knob the information the observer has about where the $T$ gates are in spacetime.
Mathematically, this amounts to using the conditional probability of applying a monitor, given that a $T$ gate was present, as a control parameter. 
Focusing within the volume-law phase and in the regime where $\mathcal{O}(n)$ $T$ gates are applied per layer, we expect to find a phase transition from the previously found hard phase (cf. Sec.~\ref{sec:AgMon},~\ref{sec:qCPX}, corresponding to a zero-knowledge observer) to an easy phase as the observer's knowledge increases. 
We provide numerical simulations of the runtime proxy %
in Sec.~\ref{sec:MonGame}, which agree well with this expectation.
\subsection{Outline}
The rest of the paper is organized as follows.
In Sec.~\ref{sec:model_sim}, we present our circuit model, review PBC, and comment on the link between the dynamics of magic and the weights of PBC measurements. We then introduce a runtime proxy for classically simulating PBC, and define a corresponding order parameter. We also outline how a mapping to percolation can identify blocks in the original circuit amenable for separate simulation.
In Sec.~\ref{sec:CPXtrans}, we provide numerical evidence that a magic transition exists at a critical monitoring rate.
In Sec.~\ref{sec:SPandCartoon}, we propose SP as a mechanism for this transition, showing how it removes magic from the circuit and how this relates to MF in PBC. 
In Sec.~\ref{sec:MonGame}, we introduce the $T$-correlated monitoring model where a dynamical magic transition can occur within a volume-law phase. 
Finally, in Sec.~\ref{sec:Discussion}, we discuss our findings and future directions.
\section{Quantum circuit model and its simulation \label{sec:model_sim}}
\subsection{Monitored Clifford\texorpdfstring{$\bm{+T}$}{} circuits \label{sec:models}}
We shall consider Clifford$+T$ quantum circuits acting on $n$ qubits and with depth $D$, taking  $D = \mathrm{poly}(n)$.
The circuit architecture is shown in Fig.~\ref{fig:circuit}$(a)$.   
The Clifford gates are 2-qubit gates in a brickwork pattern; each gate is chosen randomly from a uniform distribution over the 2-qubit Clifford group $\mathcal{C}_2$~\cite{nielsen_chuang_2010}. 
Between each Clifford layer, we randomly apply the non-Clifford $T = \text{diag}(1, e^{i\pi/4})\sim e^{-i\pi Z/8}$ gate to each qubit with probability $q$ (which may be a function of $n$).
We also apply projective monitoring $Z$ measurements to certain qubits between the Clifford layers. 
(We could alternatively perform $X$ or $Y$ monitor measurements; we do not expect this would change the results obtained.)
Monitoring not only alters the state of the system but, by retaining measurement outcomes, it maintains a record of the state's evolution.
The last step of the circuit is a complete set of computational basis measurements.
We are interested in the difficulty of simulating these final measurements, i.e.,
sampling from the circuit's output distribution.

We consider two different models for the way in which qubits are monitored.
In the first, each qubit is measured in between Clifford layers with probability $p$ [Fig.~\ref{fig:circuit}$(a)$] irrespective of the preceding $T$ gates.
We term this the ``uncorrelated monitoring model."
In the second, we consider correlated $T$ gates and measurements, which can also be viewed as a monitoring observer who performs $pn$ measurements per circuit layer at locations of their choosing.
The aim of this observer is to make the final computation as simple as possible.
In the uncorrelated monitoring model, the observer is unaware of the locations of the $T$ gates, whereas in the alternative ``$T$-correlated monitoring model", they have this information and can use it to facilitate their task.
In this latter model, the spacetime locations of monitors and $T$ gates become correlated.

We are interested in how $p$ and $q$ influence the hardness of classically sampling from the output probability distribution, i.e., weak simulation~\cite{QRS23}.
Our main focus is the role of magic, thus the deviation from stabilizer-simulability; we assess this by developing a classical runtime estimate $\mathrm{\textsf{CPX}_{\mathrm{PBC}}}$ (cf. Sec.~\ref{sec:CpxProxy}) for exactly weakly simulating a typical quantum circuit using PBC. 
Our primary interest is whether $\mathrm{\textsf{CPX}_{\mathrm{PBC}}}$ scales exponentially (\textit{hard} phase) or polynomially (\textit{easy} phase) with $n$. 
Next, we review PBC and how it enables us to distill the essential quantum core of the simulation.

\subsection{Pauli-Based computation, magic spreading, and the runtime proxy}\label{sec:CpxProxy}
For Clifford$ + T$ circuits, PBC provides a natural method for classical simulation~\cite{Bravyi_PBC2016}. 
PBC is a quantum computational model that can efficiently sample from the output distribution of a quantum circuit involving poly$(n)$ Cliffords and $t=\text{poly}(n)$ $T$ gates.
It requires only the ability to do up to $t$ commuting Pauli measurements on $t$ qubits, prepared in a suitable initial state, along with poly$(n)$ classical processing. 
When performing quantum computation, PBC thus distills the essentially quantum parts of the problem and uses a quantum processor for these, while offloading the classically efficiently doable parts to a classical computer.
By the same logic, PBC also provides a route to classically simulating a circuit in a time that scales only as $\exp (t)$ rather than $\exp (n)$, a considerable advantage for circuits dominated by Clifford gates. 
Here, we summarize the process of converting a Clifford$+T$ circuit into PBC and the classical simulation of the latter. 
\begin{figure*}[t]
    \centering
        \includegraphics[width=17.2cm]{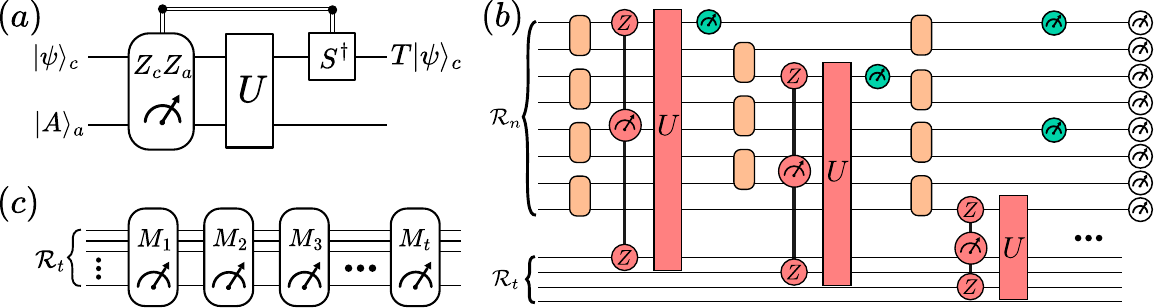}%
    \caption{%
    Pauli-based computation (PBC) for the quantum circuit in Fig.~\ref{fig:circuit}, cf. Sec.~\ref{sec:models}.
    $(a)$: Magic state gadget applying a $T$ gate to  qubit $c$ in state $\ket{\psi}_c$ while consuming ancilla qubit $a$ in state $\ket{A}_a$. 
    The gadget involves measuring $Z_cZ_a$, followed by applying $U = \exp \left( -i \pi Z_c X_a /4 \right)$ 
    and, depending on the measurement outcome, applying $S^\dagger$. 
    $(b)$: The monitored circuit from Fig.~\ref{fig:circuit} (acting on the computational register $\mathcal{R}_n$) but with magic state gadgets, shaded red, replacing $T$ gates (the circuit up to the first three $T$ gates is shown).
    $\mathcal{R}_t$ is the register of magic states. Gadget measurements are pre-selected to $-1$ for simplicity, so no adaptive $S^\dagger$ gates are required. 
    $(c)$: The PBC
    resulting from the circuit in $(b)$ is a series of commuting measurements acting only on $\mathcal{R}_t$.}
    \label{fig:PBC_fig}
\end{figure*}
To convert a monitored quantum circuit [including final measurements, cf. Fig.~\ref{fig:circuit}$(a)$] into a PBC circuit, one begins by replacing all $T$ gates with ``magic state gadgets''~\cite{magic0}, 
simultaneously introducing $t$ ancilla qubits, each in a so-called magic state: %
\begin{align}
\ket{A} = \frac{1}{\sqrt{2}} \left( \ket{0} + e^{i\pi/4} \ket{1} \right).
\end{align}
The magic state gadget is a procedure involving only Clifford gates and measurements, acting on the target qubit and a single ancilla magic state.
This gadget is shown in Fig.~\ref{fig:PBC_fig}$(a)$.
It involves the measurement of the joint parity operator $Z_c Z_a$, where $Z_c$ is the Pauli operator acting on the target (computational) qubit, and $Z_a$ is that acting on the ancilla qubit.
The measurement is then followed by the action of the following Clifford gate:
\begin{align}\label{eqn:Gadget_Clifford_U}
    U = \exp{\left(-i\frac{\pi}{4}Z_c X_a\right)}\in\mathcal{C}_2,
\end{align}
where $\mathcal{C}_k$ is the $k$-qubit Clifford group.
If the measurement outcome is $+1$, then $S^\dagger$ is next applied to the target qubit. 

After replacing all $t$ $T$ gates with the above gadgets, we have two registers: an $n$-qubit computational register and a $t$-qubit MSR.
The circuit acting on these two registers now involves Clifford gates, gadget measurements (GMs), monitoring measurements and final ``output" measurements, see Fig.~\ref{fig:PBC_fig}$(b)$ for an example. Our simulation goal is to sample the output measurement outcomes.

We further simplify this circuit by commuting each Clifford gate past all measurements, which are thus updated $M \mapsto C^\dagger M C$ for Clifford gate $C$ and original measurement operator $M$. 
Once the Clifford gate is commuted past all measurement operators, it no longer affects the final output distribution and so can be deleted. 
The set of measurements can then be restricted, with only poly$(n)$-time classical processing time~\cite{Bravyi_PBC2016,Yoganathan2019QuantumAO} (see App.~\ref{app:PBC_Details} for review), to a mutually commuting subset that acts non-trivially only on the MSR.
Therefore, the original computational register can be deleted and we are left with at most $t$ commuting measurements performed on $t$ magic states.
The simulation of the initial circuit can be replaced by the simulation of these MSR measurements [and the poly$(n)$-time classical processing], cf. Fig.~\ref{fig:PBC_fig}$(c)$.

With PBC defined, we can now link the easy and hard phases in Fig.~\ref{fig:circuit}$(b),(c)$ to the dynamics of magic in the original circuit. 
We first note that the hard regime requires magic to spread; in other words, operator spreading is a requirement for multi-qubit MSR measurements to build up.
This could be due to the spreading of $T$ gates, supplying an intuitive path for magic spreading; 
but this can also be combined with the spreading of stabilizers preceding a $T$ gate since nonlocal stabilizers can allow for a $T$ gate's magic to instantly delocalize (see Sec.~\ref{sec:CCM} for a further discussion).
For a monitor or output measurement to couple to  multiple $T$ gates, and hence be able to yield multi-qubit MSR measurements, it must be in the causal future of the magic injected by these $T$ gates, as detailed in Sec.~\ref{sec:CCM}.
Since the $T$ gates occur randomly, coupling $\mathcal{O}(n)$ of them to a measurement requires spatial operator spreading, and thus locally inserted magic to spread.
The role of monitors is to interrupt the spreading of magic (and thus the creation of increasingly complex stabilizer superpositions). This is the essence of how SP can arrest the dynamics of magic and lead to MF, as we shall see in Sec.~\ref{sec:SPandCartoon}.

To characterize this competition with an order parameter, we must first describe how one classically simulates PBC. This involves evaluating the probabilities $P(\mathbf{m})$ 
for measurement outcomes %
$\mathbf{m}~=~(m_1, \ldots , m_k)$ %
at any time step $k\leq t$.
By computing $P(m_1)$, then $P(m_2|m_1) = P(m_2,m_1)/P(m_1)$ etc. one can flip coins with appropriate biases to simulate the PBC measurements~\cite{Bravyi_PBC2016}.
Therefore, we wish to evaluate  $\bra{A}^{\otimes t} \Pi_{\mathbf{m}} \ket{A}^{\otimes t}$ for  $\Pi_{\mathbf{m}} = \prod_{j=1}^k \frac{1}{2}(1 + m_j M_j)$, where the $M_j$ are the commuting PBC measurements. 
By decomposing $\ket{A}^{\otimes t}$ into a low-rank sum of $\chi_t$ stabilizer states, one can perform these evaluations in $\mathcal{O}(t^3 \chi_t^2) =  \text{poly}(t) 2^{\mathcal{O}(t)}$ time~\cite{Bravyi_PBC2016,bravyi2016improved,Qassim2021improvedupperbounds}.

If the MSR can be partitioned (as is the case for MF), this can speed up and parallelize this classical computation.
Let us suppose we split the measurements into the largest number $K$ of subsets $R_i$ ($i=1,\ldots,K$) with the constraint that no measurement in $R_i$ has support overlapping with the support of any measurement in $R_{j\neq i}$. 
That is, if we let $\text{Sup}(R_i)$ be the union of the supports of all measurements in $R_i$, then $\text{Sup}(R_i)$ are mutually disjoint sets.
Let $t_i = |\text{Sup}(R_i)|$ be the size of $\text{Sup}(R_i)$, where $\sum_i t_i \leq t$.
Now evaluating the measurement probability for time step $k$
involves evaluating a probability for each MSR partition $P(m_1,\ldots,m_k) = \prod_{i=1}^K P_i(k)$, where   
\begin{align}
P_i(k) = \bra{A}^{\otimes t_i} \prod_{\substack{j: \; j\leq k \\[0.3em] M_j\in R_i}} \frac{1+ m_j M_j}{2}\ket{A}^{\otimes t_i}.
\end{align}
Thus the runtime in evaluating the probability becomes
\begin{align}
\mathcal{O}\left( \sum_{i=1}^K |R_i| t_i^3 \chi_{t_i}^2 \right) = \sum_{i=1}^K \text{poly}(t_i) 2^{\mathcal{O}(t_i)}
\end{align}
since there are $|R_i|$ measurements in subset $i$, each of which acts on $t_i$ qubits.
This quantity is exponential only in the parameters $t_i$, not necessarily in $t$.
Note that if each $t_i = \mathcal{O}(1)$, corresponding to a magic fragmentation regime (cf. Sec.~\ref{sec:SP_MF}), then the entire computation can be executed in poly$(n)$ time.
Therefore, we define the following runtime proxy that captures the exponentially scaling part of the above runtime for simulating the PBC:
\begin{equation}
    \text{\textsf{CPX}}_{\text{PBC}} = \sum_{i=1}^{K'} 2^{t_i}. \label{eq:PBCcpx}
\end{equation} 
Here, we restrict the sum to the $K' \leq K$ MSR partitions $R_i$ that support at least one of the final output measurements since computing the probabilities for gadget measurements is trivial and we do not need to calculate the monitor measurement probabilities since their outcomes are given (cf. App.~\ref{app:PBC_Details}).
This runtime proxy differs from the actual runtime by poly$(n)$ prefactors and also by $\mathcal{O}(1)$ factors in the exponents.
However, we are interested only in the efficiency of PBC-based classical simulation, i.e., whether the runtime scales polynomially or (at least) exponentially with $n$.
This is indeed captured by the scaling of \textsf{CPX}$_\text{PBC}$.

We define the \textit{order parameter} in terms of the \textit{typical} value of $\textsf{CPX}_\mathrm{PBC}$ among random circuit realizations:
\begin{align}\label{eq:order_param}
    \log \mathrm{\textsf{CPX}}^{(\mathrm{typ})}_{\mathrm{PBC}}/t \equiv \mathbb{E}_\text{RQC} \left[ \log (\mathrm{\textsf{CPX}}_{\mathrm{PBC}})/t \right],
\end{align}%
where $\mathbb{E}_\text{RQC}$ is the expectation value over the uniformly distributed Clifford gates and the randomly placed monitor measurements and $T$ gates.
In a hard phase (no MF), we expect \textsf{CPX}$_\text{PBC}^\text{(typ)} = \exp(t)$, thus, the order parameter would be $\mathcal{O}(1)$ and positive.
In an easy phase (with MF), we expect \textsf{CPX}$_\text{PBC}^\text{(typ)} = \mathrm{poly}(n)$, hence, the order parameter would vanish as $n\to\infty$ since $t = \mathrm{poly}(n)$.

We emphasize that, by measuring the ``unfragmented" magic in PBC, Eq.~\eqref{eq:order_param} accounts for the fraction of the potentially present magic ($t$) structured such that it results in classically hard to simulate PBC (as $n$ is increased). In terms of magic dynamics, unlike existing global magic measures, one can view Eq.~\eqref{eq:order_param} as a proxy for the fraction of injected magic that can spread in the original circuit.

\subsection{Circuit cluster selection and percolation \label{sec:perc}}
For frequent enough monitors, only a small part of the entire circuit history suffices for simulation.
Projectively measuring a qubit makes the previous state partially irrelevant for simulation: as an extreme case, if all qubits are monitored at the same time-step $t_0$, then determining the final state requires simulating only the evolution after $t_0$.
Thus, monitors disconnect the circuit temporally.
Similarly, separable Clifford gates (i.e., those 2-qubit gates of the form $C_1 \otimes C_2$ for $C_1,C_2\in \mathcal{C}_1$) disconnect the circuit spatially by allowing for the simulation of neighbouring qubits in parallel, cf. App.~\ref{app:stp}.
Therefore, we can simplify the simulation by mapping our circuit architecture from Fig.~\ref{fig:circuit}$(a)$ to a percolation model (cf. App.~\ref{app:mapping}),  and focusing on circuit clusters connected to the final-time boundary.
The numerical experiments for computing \textsf{CPX}$_\text{PBC}$ from Sec.~\ref{sec:CPXtrans},~\ref{sec:qCPX},~\ref{sec:MonGame} use this optimization procedure, i.e., they apply the PBC procedure only on the relevant circuit clusters.

This percolation model sets an upper bound for the critical monitoring rate of a simulability transition (cf. App.~\ref{app:TNtransition}).
The size and depth of the selected circuit clusters directly sets the runtime for using exact tensor network (TN) contraction to sample from the output distribution~\cite{markov2008sim}.
Above a critical monitoring rate $p_\mathrm{c}^\mathrm{TN} \simeq 0.48$, the size and depth of the clusters are $\mathcal{O}(1)$; thus classically simulating the circuit by TN contraction is easy.
Indeed, the finite size of the clusters also results in the PBC method being efficient for any value of $q$.
Therefore, $p_\mathrm{c}^\mathrm{TN}$ sets an upper bound for a simulability phase transition.
In Sec.~\ref{sec:CPXtrans} and~\ref{sec:MonGame}, we shall study settings where such a transition occurs at $p_\mathrm{c} < p_\mathrm{c}^\mathrm{TN}$, while in Sec.~\ref{sec:qCPX}, we describe settings where the bound is saturated.

Note that our use of percolation theory differs from Ref.~\onlinecite{qcpx_percolation}, which focuses on state complexity. As Ref.~\onlinecite{qcpx_percolation} notes, this can be distinct from the complexity of weak simulation, 
which our percolation mapping (accounting for separability in $\mathcal{C}_2$) assesses. 
\section{Magic Transitions with Uncorrelated Monitoring}
\label{sec:CPXtrans}
Before discussing, in Sec.~\ref{sec:SPandCartoon}, the mechanism behind the dynamical magic phase transitions we study, we first describe the $\mathrm{\textsf{CPX}_{\mathrm{PBC}}}$ transition that occurs in the uncorrelated monitoring model: 
we show results from numerical simulations of the runtime proxy $\mathrm{\textsf{CPX}_{\mathrm{PBC}}}$, for fixed $qD=\mathcal{O}(1)$ [implying $t=\mathcal{O}(n)$], and show that MF is linked with the magic transition.
We describe the setup of the numerical experiment, discuss the results, and highlight where the mechanism based on SP from the next section will come into play.
\begin{figure}[htpb]
    \centering
        \includegraphics[width=8.6cm]{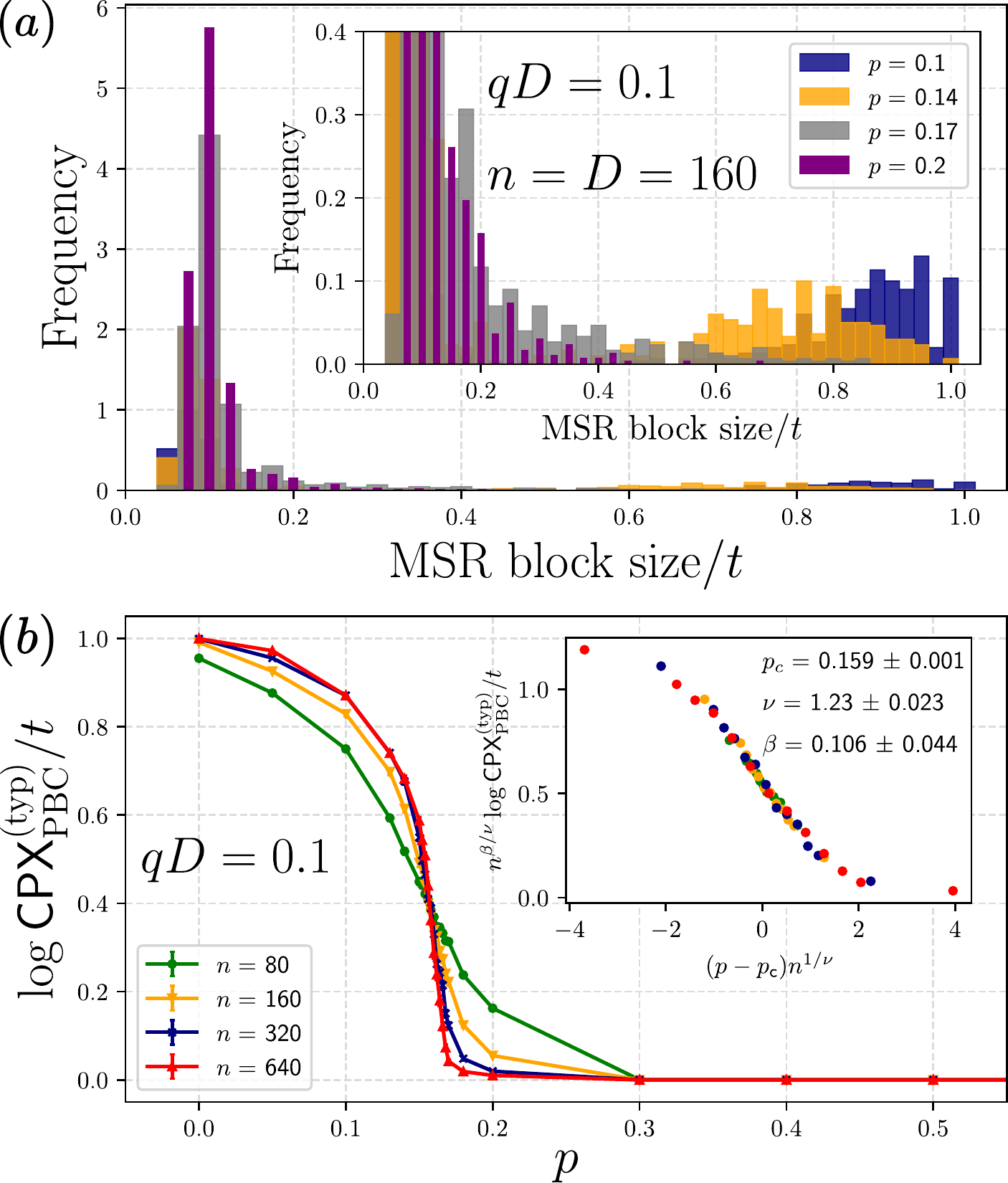}%
    \caption{
                Dynamical magic phase transition for fixed $qD = \mathcal{O}(1)$ coinciding with the entanglement phase transition. 
                $(a)$: The MSR is fragmented into $\mathcal{O}(1)$ blocks for $p>p_\mathrm{c} \approx 0.159$ while it has an $\mathcal{O}(n)$ size block
                for $p<p_\mathrm{c}$. 
                The inset more closely shows the histogram of block sizes averaged over 300 realizations.
                $(b)$: The order parameter $\log \mathrm{\textsf{CPX}}^{(\mathrm{typ})}_{\mathrm{PBC}}/t$ versus measurement probability $p$. The error bars, showing the standard error of the mean (SE), are imperceptible. 
                The  order parameter drops from a finite value for $p<p_\mathrm{c}$ to zero for $p>p_\mathrm{c}$ (it remains zero up to $p=1$ but this is not shown).
                The inset shows the finite-size scaling collapse characteristic of a continuous magic phase transition.
            }
    \label{fig:qD_CPX}
\end{figure}
The original circuits  in our setups, subsequently expressed as PBC, follow Fig.~\ref{fig:circuit}$(a)$. 
We perform the classical part of PBC, as detailed in App.~\ref{app:PBC_Numerics}, to find the set of PBC measurements. 
From this set, we infer the sizes of the MSR blocks and the runtime proxy $\mathrm{\textsf{CPX}_{\mathrm{PBC}}}$. 
For concreteness, we take $D=n$, however, any $D = \mathrm{poly}(n)$ depth circuit should generically yield the same results.
We find a MF-driven magic phase transition at $p_\mathrm{c} \approx 0.159$, a value consistent with a simultaneous entanglement transition~\cite{li2019mipt,zabalo2020EEpc}.  
Below $p_\mathrm{c}$ (hard phase),
the distribution of MSR block sizes is (shallowly) peaked at a value set by the total MSR size $t \sim qDn$, cf. Fig.~\ref{fig:qD_CPX}$(a)$, inset. 
Above $p_\mathrm{c}$ (easy phase), this distribution becomes peaked at unit block size, with a tail decaying to zero (faster for larger $p$) in an $n$-independent manner, cf. Fig.~\ref{fig:qD_CPX}$(a)$. 
The order parameter is non-zero for $p<p_\mathrm{c}$ and vanishes for $p>p_\mathrm{c}$, cf. Fig.~\ref{fig:qD_CPX}$(b)$. 
Using finite-size scaling~\cite{cardyFSS,andreas_sorge_2015_35293}, we find $p_\mathrm{c} = 0.159 \pm 0.001 $. 
The inset of Fig.~\ref{fig:qD_CPX}$(b)$ displays the corresponding scaling collapse, signaling a continuous phase transition~\cite{PhysRevLett.28.1516, cardyFSS} as a function of $p$. 
These results illustrate, firstly, that MF occurs upon increasing $p$, driving the system to the  ``easy" phase with vanishing order parameter (for $n\to\infty$).  
Secondly, noting that $p_\mathrm{c}$ is consistent with the critical value $p_\mathrm{c}^\mathrm{EE}=0.154\pm 0.004$ for the entanglement transition in Clifford circuits~\cite{li2019mipt,zabalo2020EEpc} (with the slight deviation possibly attributable to the $t=\mathcal{O}(n)$  $T$ gates)
we see that magic and entanglement transitions can co-occur.
As we shall argue, this co-occurrence is due to the tight link, for $qD = \mathcal{O}(1)$ with uncorrelated monitoring, between SP and entanglement. 
We studied other $qD = \mathcal{O}(1)$ values, finding similar results (see App.~\ref{app:add_qD}).
The key concept for understanding the appearance of MF and the coincidence of $\mathrm{\textsf{CPX}_{\mathrm{PBC}}}$ and EE transitions is SP. 
We shall see in the next section that SP causes MF and drives the simulability transition and that, for fixed $qD=\mathcal{O}(1)$, an area-law entanglement phase leads to SP.

\section{Magic Transitions via Stabilizer Purification\label{sec:SPandCartoon}}
Here, we present SP as a mechanism behind the magic transition described in the previous section.
The overall picture of this mechanism is: in an easy phase, each $T$ gate has its magic annihilated before the next one is applied; whereas in a hard phase, magic from many $T$ gates accumulates.
Our aim is to calculate the probability that the magic introduced by a $T$ gate can be removed by monitors before further $T$ gates would be applied. %
If this probability is high, then the evolved state is constantly stabilizer-purified, leading to an easy phase.
This probability depends on the temporal separation of $T$ gates and the monitoring probability.

We compute this probability by starting from a simplified model, then progressively refining it until it captures most features of the uncorrelated monitoring model. 
In Sec.~\ref{sec:SPTgate}, we study how a single monitor removes the magic of a single $T$ gate (or reduces the magic from a small number of $T$ gates) in both the original and the PBC circuits. %
In Sec.~\ref{sec:SPtoMF}, relating these two perspectives, we argue that SP leads to MF and hence a magic transition. 
In Sec.~\ref{sec:SPTime_purifier}, %
we calculate the probability to stabilizer-purify and explain how it relates to the entanglement phases.
In Sec.~\ref{sec:SPtime_implications}, based on the SP probability, we interpret the magic transition for uncorrelated monitoring described in Sec.~\ref{sec:CPXtrans}.
In Sec.~\ref{sec:SPdirect}, we describe how SP implies that stabilizer simulation of the original circuit (rather than the PBC circuit) is also easy.

\subsection{Stabilizer-purified \texorpdfstring{$\bm{T}$}{} gate \label{sec:SPTgate}}
Here, we explore the constraints a monitor has to satisfy to remove the magic introduced by a single $T$ gate. 
We assume that the system had been in a stabilizer state $\ket{\psi}$ before the $T$ gate acted and that $T\ket{\psi}$ is a non-stabilizer state, i.e., that $T$ splits $\ket{\psi}$ into a superposition of two stabilizer states.
The monitor is a Pauli measurement $M$ (absorbing in it the Cliffords between $T$ and the monitor), with projector $\Pi$ and study the conditions for $\Pi T \ket{\psi}$ being an (unnormalized) stabilizer state, i.e., for SP.
More details on the following considerations are in App.~\ref{app:SPTgate}.

In this first simplified model,
we shall consider $\ket{\psi}$ as a random stabilizer state on $n$ qubits, with stabilizer group $\mathcal{S} = \langle s_1, \dots, s_n \rangle $, where $s_i$ for $i=1,\dots,n$ are a complete set of stabilizer generators, uniquely specifying $\ket{\psi}$ by $s_i\ket{\psi}=\ket{\psi}$.
Without loss of generality, we take $T$ to act on the first qubit, $T=T_1$, and that $Z_1$ anti-commutes with $s_1$ (if it commuted with all $s_i$ then $T_1\ket{\psi}$ would be a stabilizer state). 
Then $T_1 \ket{\psi} = c_+ \ket{\psi_+} + c_- \ket{\psi_-}$, where $\ket{\psi_\pm}$ are stabilized by groups $\mathcal{S}_\pm = \langle \pm Z_1, g_2, \dots, g_n \rangle$ respectively, where $g_i$ are updated generators (i.e., $g_i = s_i$ if $[Z_1, s_i]=0$ and $g_i = s_1 s_i$ if $\{ Z_1, s_i \} =0$).

The resulting state can be interpreted as an encoded state of a stabilizer quantum error-correcting code~\cite{gottesman1997stabilizer}, with stabilizer group $\mathcal{G} = \langle g_2,\ldots,g_n\rangle$ and logical operators $Z_1$ and $s_1$; these mutually anti-commute but commute with all $g_i$ for $i=2,\ldots, n$. %
The single logical qubit of this code is in a magic state.
The only way for $M$ to yield a stabilizer post-measurement state is to measure the state of the logical qubit.
That is, $M$ must belong to one of the following cosets (up to an irrelevant sign):
\begin{align}\label{eq:Stab_pur_possibs}
    Z_1 \mathcal{G}, \; s_1\mathcal{G}, \; is_1 Z_1 \mathcal{G}. 
\end{align}
We prove this in App.~\ref{app:MonForm}. 

Consider what happens in PBC in the three cases of Eq.~(\ref{eq:Stab_pur_possibs}). 
After replacing the $T$ gate with a gadget and commuting the gadget Cliffords past
$M$ (which already absorbed the circuit Cliffords between $T_1$ and the monitor), we are left with the following sequence of measurements:
\begin{enumerate}
    \item Gadget measurement (GM) of operator $Z_1 Z_a$, where $Z_a$ acts on the ancilla qubit.
    \item Updated monitor measurement $M'=U^{\prime\dagger} MU^\prime$ where
    $U^\prime=(S_1^\dagger)^{(1+m)/2}U$ with $m$ the gadget measurement outcome and $U$ the gadget Clifford from Eq.~(\ref{eqn:Gadget_Clifford_U}). Eq.~(\ref{eq:Stab_pur_possibs}) implies that $M'$ is in one of the following cosets (up to a sign): 
    \begin{align}\label{eqn:Monitor_mmt_possibs}
    Z_1 \mathcal{G}, \; s_1 X_a \mathcal{G},\; is_1 Z_1 X_a \mathcal{G}.
    \end{align}
\end{enumerate}
As noted, $Z_1$ anti-commutes with $s_1$, which stabilized the initial state  $\ket{\psi}$. 
Therefore, according to the PBC procedure, we replace the GM with a Clifford gate $V = \exp(\lambda\frac{\pi}{4} Z_1 Z_a s_1)$ for $\lambda = \pm 1$ chosen uniformly at random (see App.~\ref{app:PBC_Details}).
We then commute this $V$ past the updated monitor measurement.

In the first case from Eq.~(\ref{eqn:Monitor_mmt_possibs}), $V$ does not commute with the updated monitor and so updates it to an operator from coset $s_1 Z_a\mathcal{G}$ (up to a sign). 
In the second case, $V$ commutes with the updated monitor and hence leaves the measurement operator unchanged.
In the final case, $V$ does not commute with the measurement and so updates it to an operator from the coset $iX_a Z_a \mathcal{G}$ (up to a sign).

As can be seen, in all cases, the final result is that the monitor measurement (updated by the gadget Cliffords and by $V$) commutes with all members of $\mathcal{S}$, the stabilizer group of the initial state $\ket{\psi}$.
Therefore, it is retained in the PBC: it becomes a single-qubit measurement on the ancilla magic state, projecting it into a stabilizer state (see App.~\ref{app:MonForm} for more details).

\subsubsection{Causal cone of magic: linking magic spreading to stabilizer purification and  complexity\label{sec:CCM}}
\begin{figure}[t]
    \centering
        \includegraphics[width=0.8\columnwidth]{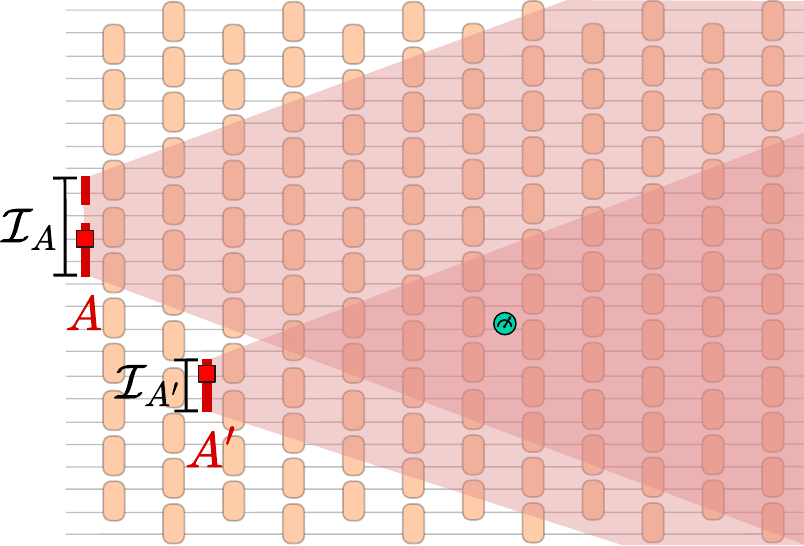}
    \caption{
            Causal cones of magic of two $T$ gates, extending from the apexes $\mathcal{I}_A$ and $\mathcal{I}_{A'}$ (black segments) for $A$ and $A'$ (bold red lines), respectively.
            A later measurement (green circle) can acquire support on the ancillas of both $T$ gates only if it is in the overlap of the causal cones.
            }
    \label{fig:CCM}
\end{figure}

Consider a circuit with
a single $T$ gate and the smallest set of spacetime locations outside of which a monitor cannot SP the $T$ gate for any choice of subsequent Clifford layers. 
While the detailed structure of this set depends on that of the logical operators, its convex hull forms a causal cone extending from the smallest interval $\mathcal{I}_A$ containing the union $A$ of the supports of minimum-width logical operators in $s_1 \mathcal{G}$, cf. Fig.~\ref{fig:CCM} (see also App.~\ref{app:CCM}).
We call this cone the \textit{causal cone of magic} (CCM) and $\mathcal{I}_A$ the cone's apex.
The $T$ gate's forward lightcone is always within this CCM, as intuitively expected. 
However, the CCM can be broader; it may have a nonzero-width apex since the magic from the $T$ gate can instantly delocalize and thus SP may occur from measurements elsewhere than the $T$ gate's support; in particular, $\ket{\psi_+}$ and $\ket{\psi_-}$ may be distinguished, and hence SP can occur from  monitors, throughout the support of the most local members of $s_1 \mathcal{G}$ (these toggle between $\ket{\psi_+}$ and $\ket{\psi_-}$) and it is this support that sets $A$. 
Measuring qubits outside the CCM however cannot SP. 
Thus such measurements preserve the magic injected by the $T$ gate. 
In this way, the CCM characterizes how magic from a $T$ gate can spread.

The CCM can also be used to link magic spreading to multi-qubit measurements in PBC and hence to  $\text{\textsf{CPX}}_{\text{PBC}}$. 
To illustrate this, consider two $T$ gates, $T_k$ and $T_l$, that both inject magic into the circuit, thus yielding two CCMs. 
The state $T_k T_l \ket{\psi}$ encodes two logical qubits, $k$ and $l$, associated through the gadgets to ancilla $a_k$ and $a_l$, respectively. 
We assume that there are no measurements in the circuit layers between the $T$ gates, for simplicity.
Consider a computational basis measurement (monitor or final output) in some layer after the second $T$ gate. 
For this measurement to be retained in PBC it must commute with all stabilizer generators and hence, to be nontrivial, it must measure a logical operator.
By the definition of the CCM, this can be a two-logical-qubit operator, thus coupling $a_k$ and $a_l$, only if it is in the intersection of both CCMs. 
(See App.~\ref{app:CCM} for further details.)
If there were monitor measurements, or such choices of Cliffords after the $T$ gates, that arrest the spread of magic from the apexes into the CCMs such that the logical operators' supports cannot develop an intersection, this would prevent the appearance of multi-qubit MSR measurements. 
Thus, the buildup of magic from both $T$ gates in the original circuit due to operator spreading is required for a buildup of the complexity in PBC. 

\subsection{SP leads to MF \label{sec:SPtoMF}}
Here, we outline how SP leads to MF, using a second simplified model building on the above picture.
We shall consider a circuit with an input stabilizer state acted on by poly$(n)$ circuit blocks.
Each of these blocks has poly$(n)$ depth, $\mathcal{O}(1)$ $T$ gates randomly placed between Clifford gates and projective measurements.
The main assumption of this second simplified model is that after each block, the state of the system is a stabilizer state.
Importantly, something akin to this picture resembles the easy-to-simulate phases in our models, cf. Secs.~\ref{sec:SPtime_implications},~\ref{sec:MonGame}.

A simple argument shows that stabilizer-purifying the $T$ gates from each block before the next one occurs yields size $\mathcal{O}(1)$ MSR measurements in the PBC.
First we prove the following:
\begin{theorem}
    \label{thm:PBC}
    The output of a monitored Clifford+$T$ circuit (acting on arbitrary initial stabilizer state) is a stabilizer state if and only if the output of the corresponding PBC is a stabilizer state.
\end{theorem}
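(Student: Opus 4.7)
The plan is to prove the iff by connecting both sides to a common intermediate object, the ``gadgetized'' circuit, in which every $T$ gate is replaced by its magic state gadget. This turns the Clifford+$T$ circuit into a sequence of Cliffords and projective Pauli measurements on the joint register $\mathcal{R}_n\otimes\mathcal{R}_t$ initialized in $\ket{\psi_0}_{\mathcal{R}_n}\otimes\ket{A}^{\otimes t}_{\mathcal{R}_t}$, and it is also the starting point for the PBC transformations of Sec.~\ref{sec:SPTgate}. The overall strategy is to reduce the iff to two elementary facts about Pauli stabilizer states: Cliffords preserve stabilizerness, and a pure product state is a stabilizer state if and only if both of its factors are.

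First, I would establish the product structure of the gadgetized pre-OM state $\ket{\Phi}$. A direct calculation of the gadget of Fig.~\ref{fig:PBC_fig}($a$) shows that, conditioned on its outcome $m\in\{\pm1\}$, it leaves the ancilla in the fixed Pauli stabilizer state $\ket{y_{\mp}}$ and in a tensor product with the computational register. Because subsequent original-circuit operations (Cliffords, monitors, OMs) never touch the ancillas, this product persists to the end, giving $\ket{\Phi}=\ket{\psi_{\text{orig}}}_{\mathcal{R}_n}\otimes\ket{\eta}_{\mathcal{R}_t}$, where $\ket{\psi_{\text{orig}}}$ is the original circuit's pre-OM state and $\ket{\eta}$ is automatically a Pauli stabilizer state on $\mathcal{R}_t$.

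Second, I would relate $\ket{\Phi}$ to the PBC output $\ket{\phi_{\text{PBC}}}$ via a Clifford on $\mathcal{R}_n\otimes\mathcal{R}_t$. The PBC procedure transforms the gadgetized circuit by (a) commuting all Cliffords to the end via $M\mapsto C^{\dagger}MC$, (b) replacing each gadget measurement by the Clifford $V$ with random sign $\lambda=\pm1$ described in Sec.~\ref{sec:SPTgate}, and (c) restricting the remaining commuting measurements to $\mathcal{R}_t$ using the stabilizer structure of $\ket{\psi_0}$. Each primitive is either Clifford or measurement, and---once the random $\lambda$ is consistently identified with the physical gadget outcome $m$---these steps can be assembled into a Clifford $C_{\text{net}}$ on the joint register such that $\ket{\Phi}=C_{\text{net}}\bigl(\ket{\psi_0}_{\mathcal{R}_n}\otimes\ket{\phi_{\text{PBC}}}_{\mathcal{R}_t}\bigr)$.

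The iff then follows by chaining: $\ket{\Phi}$ is a Pauli stabilizer state iff $\ket{\psi_0}\otimes\ket{\phi_{\text{PBC}}}$ is (by Clifford invariance), iff $\ket{\phi_{\text{PBC}}}$ is (since $\ket{\psi_0}$ is stabilizer by hypothesis, applying the product-state fact); separately, $\ket{\Phi}$ is stabilizer iff $\ket{\psi_{\text{orig}}}$ is (since $\ket{\eta}$ is always stabilizer, applying the product-state fact again). The main obstacle is the Clifford-equivalence step: although each primitive of PBC is standard, assembling them into an explicit $C_{\text{net}}$ that correctly identifies the $V$-sign $\lambda$ with the physical outcome $m$ requires careful bookkeeping of how the random $V$-choices propagate through the subsequent Clifford commutations and how the restricted measurements on $\mathcal{R}_t$ relate to the original measurements on $\mathcal{R}_n\otimes\mathcal{R}_t$.
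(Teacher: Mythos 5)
Your proposal is correct, and it reaches the result by a route that differs from the paper's in its key lemma, even though the underlying circuit transformations you track (gadgetization, commuting Cliffords to the end, replacing anti-commuting measurements by the random Clifford $V$, restricting to $\mathcal{R}_t$) are exactly the same ones the paper steps through. The paper's proof introduces a magic measure $\mathcal{M}$ that is faithful, Clifford-invariant, and additive over tensor products, and verifies that $\mathcal{M}$ of the evolving state is unchanged by each PBC transformation; the iff then follows from faithfulness. You instead avoid any magic monotone and work directly with stabilizerness, using two elementary facts: Cliffords preserve (non)stabilizerness, and a pure product state is a stabilizer state if and only if each tensor factor is. These two facts are precisely the specializations of the paper's properties (i)--(iii) to the zero-magic case, so the logical content is close; but your version buys independence from the existence of a suitable measure (the paper must cite Ref.~\onlinecite{magic7} for this), at the cost of needing the product-state factorization fact (which holds because the pure reduced state of a pure stabilizer state is itself a stabilizer state, e.g.\ via Ref.~\onlinecite{fattal2004entanglement}). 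Your packaging of the argument as a single net Clifford $C_{\text{net}}$ relating the gadgetized output to $\ket{\psi_0}\otimes\ket{\phi_{\text{PBC}}}$ is also a cleaner global statement than the paper's step-by-step invariance check, though as you note it requires the same bookkeeping the paper performs implicitly, in particular identifying the random sign $\lambda$ of each $V$ with the corresponding physical gadget outcome $m$ so that both circuits follow the same measurement trajectory. The paper's measure-based argument is marginally stronger in that it shows the \emph{amount} of magic is preserved, not just the binary stabilizer/non-stabilizer distinction, but for the theorem as stated both arguments suffice.
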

\begin{proof}
Consider a magic measure $\mathcal{M}$ obeying the following properties: (i) $\mathcal{M}(\ket{\psi}) = 0$ if and only if $\ket{\psi}$ is a stabilizer state, (ii) $\mathcal{M}(C\ket{\psi}) = \mathcal{M}(\ket{\psi})$ for any Clifford gate $C$ and (iii) $\mathcal{M}(\ket{\psi}\otimes\ket{\phi}) = \mathcal{M}(\ket{\psi}) + \mathcal{M}(\ket{\phi})$.
Such a measure exists~\cite{magic7}.

We show that this measure is unchanged upon converting a Clifford+$T$ circuit with monitors into a PBC.
Observe that after replacing a $T$ gate with a gadget and applying that gadget to the target qubit, the result is that the target qubit has a $T$ gate applied to it and the ancilla qubit ends in an eigenstate of $Y_a$ (this can be seen by commuting the gadget Clifford $U$ before the gadget measurement). 
Hence, the state after application of a gadget is altered only by the addition of stabilizer states.
Therefore, the value of the magic measure is unchanged if we apply magic state gadgets instead of $T$ gates, owing to properties (i) and (iii) above.
After this, we commute all Cliffords past all measurements to the end of the circuit and delete them; using property (ii), this does not alter the magic of the final state.
We then go through the list of measurements and replace any measurement that anti-commutes with a previous one with a random Clifford gate, commuting that to the end of the circuit and deleting it as well. 
This replacement produces the corresponding post-measurement state of the replaced measurement (see App.~\ref{app:PBC_Details}) and hence does not change the magic in the system. 
Similarly to above, commuting this Clifford past remaining measurements and deleting it does not change the magic of the state either.
We can then restrict all measurements to the MSR without changing anything about the final state.
Since the computational register now remains untouched in its initial stabilizer state, deleting it does not affect the magic of the final state [properties (i) and (iii)].
This leaves only the PBC, after whose measurements on the MSR, the magic of the final state is the same as that of the final state of the original circuit.
Hence, if the output of the original circuit is a stabilizer state, so too is the output of the PBC, and vice versa [property (i)].
\end{proof}

Since each circuit block in the second simplified model is itself a monitored Clifford+$T$ circuit and it ends in a stabilizer state, the ancillas introduced in that circuit block all end up in a stabilizer state too, from Theorem~\ref{thm:PBC}.
After the first block, suppose $k$ ancillas have been introduced. Then the measurements introduced in the first block translate to PBC measurements that project those $k$ ancillas to a stabilizer state.
For the next block, we can view these $k$ states as part of the block's initial stabilizer state, thus reducing the effective MSR size for this block to $t-k$ where $t$ is the total number of $T$ gates in the initial circuit.
Thus all subsequent measurements of the PBC act trivially on the first $k$ ancilla qubits.

Proceeding in this way, the measurements from each circuit block correspond to PBC measurements that act trivially on all ancillas apart from those introduced within that block.
But because, by assumption, there are only $\mathcal{O}(1)$ of these ancillas introduced in each block, the measurements that project them into a stabilizer state must also have $\mathcal{O}(1)$ weight. 
That is, the SP of the original circuit corresponds to MF of the PBC.

\subsection{Stabilizer-purification probability and time 
            \label{sec:SPTime_purifier}}
Here,
with the aid of a third simplified model, we outline the calculation of the SP probability, link it to entanglement and introduce the SP time.
As above, we shall consider a circuit with poly$(n)$ circuit blocks, cf. Fig.~\ref{fig:TCB}.
However, now we assume each block has only one $T$ gate, at its start. The $T$ gate is followed by a depth-$d$ brickwork of $2$-qubit Cliffords with monitor $Z$-measurements between each Clifford layer occurring on each qubit with probability $p$. 
We dub this circuit block a $T$-\textit{circuit-block} (TCB).
The full circuit has as input a stabilizer state, then TCBs in succession. 
By $t=\mathrm{poly}(n)$, this model is a cartoon for our uncorrelated monitoring model for fixed $qD$; see also Sec.~\ref{sec:SP_AC}.
(Here, we do \emph{not} assume that the output of a TCB is a stabilizer state.)

SP is guaranteed if each of the TCBs purify the magic introduced by their $T$ gate, i.e., if they output a stabilizer state. 
If so, we can use the argument from Sec.~\ref{sec:SPtoMF} to show that MF occurs. 
To study when this applies, we shall estimate the \textit{stabilizer-purification time} $\tau_\text{SP}$, the characteristic depth such that for $d\gg\tau_\text{SP}$ a TCB is a $T$ \textit{purifier}, i.e., it almost surely purifies its $T$ gate, provided its input $\ket{\psi_\text{in}}$ is a stabilizer state.
The first step for this is finding the TCB's corresponding SP probability. 

\begin{figure}[t]
    \centering
        \includegraphics[width=8.6cm]{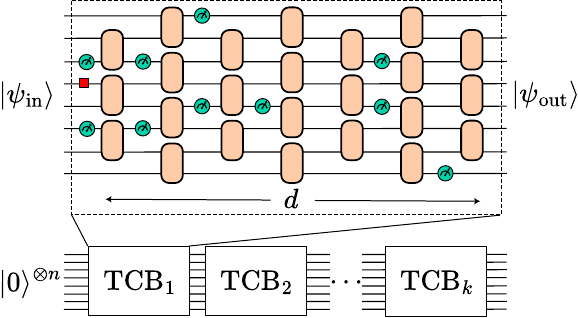}
        \caption{
                Illustration of the third and final simplified model introduced in this Section. 
                A depth-$d$ $T$-circuit-block (TCB) is illustrated (top). 
                It involves a $T$ gate (red box) being applied to input state $\ket{\psi_{\text{in}}}$, followed by random monitor measurements (green circles) and random 2-qubit Clifford gates.
                Measurements occur between Clifford layers on each qubit with probability $p$.
                The model involves $k$ TCBs applied to initial state $\ket{0}^{\otimes n}$ (bottom).
            }
    \label{fig:TCB}
\end{figure}
\subsubsection{Stabilizer-purification probability \label{sec:SPprob}}
We assume that $T\ket{\psi_\text{in}}$ is not a stabilizer state. 
Like in Sec.~\ref{sec:SPTgate}, the state encodes one logical qubit (in a magic state) with logical operators $Z_i$ and $s_1$, where $s_1$ stabilized $\ket{\psi_\text{in}}$ and the $T$ gate acts on qubit $i$ (see Sec.~\ref{sec:SPTgate}).
If bulk monitors SP, they measure the state of this logical qubit.
The SP probability of a TCB has contributions from monitors immediately after the $T$ gate, and from monitors in subsequent layers of the TCB,
\begin{equation}
    \mathbb{P} (\text{SP}) = \mathbb{P}(d^*=1) + \mathbb{P} (d \geq d^* > 1),
\end{equation}%
where $d^*\leq d$ is the depth at which SP occurs.
Here
\begin{align}
    \mathbb{P}(d^* = 1) =  1 - (1-p)^{w_0} \equiv p_1 \label{eq:p1def}
\end{align} 
for some number of qubits $w_0$ that, if measured, result in SP. $w_0\geq 1$ since monitors are $Z$-measurements and $Z_i$ is a logical operator [cf. Eq.~(\ref{eq:Stab_pur_possibs})] while $w_0\leq |A|$, with $|A|$ the cardinality of the set $A$ defining $\mathcal{I}_A$ (cf. Sec.~\ref{sec:CCM}) since outside of $A$ the monitor cannot SP.  
In subsequent layers, monitors on any qubit in the $T$ gate's CCM have some probability to stabilizer-purify.
To simplify our estimate, we replace this CCM by a strip of width $w$ around qubit $i$; setting $w=n$ or $w=1$ shall give upper and lower bounds on $\mathbb{P} (d \geq d^* > 1)$, respectively.

We now focus on the $k^{\textrm{th}}$ layer of the TCB and take $\tilde{s}_1$ and $\tilde{Z}_i$ to be the logicals $s_1$ and $Z_i$ time-evolved to this layer by Cliffords and measurements, and $\mathcal{G}=\langle g_2,\ldots, g_n\rangle $ the code's stabilizer group at this layer. 
(That is, unlike in Sec.~\ref{sec:SPTgate}, we now forward-evolve operators to the monitor, instead of backward-evolving the monitor to the $T$ gate.) 
Using now these operators in Eq.~\eqref{eq:Stab_pur_possibs}, computing the SP probability involves assessing the probability that a monitor $Z_j$ belongs to one of the SP-favorable cosets. 

To assess this, we must consider the number $\gamma_j$ of generators needed to express $Z_j$. This is where entanglement properties enter. We summarize the result, based on Ref.~\onlinecite{fattal2004entanglement}, in Theorem~\ref{thm:gamma}, which we prove in App.~\ref{app:bulkMons}.

\begin{theorem}
    \label{thm:gamma}
    For a pure stabilizer state, one can choose stabilizer generators such that a single-qubit Pauli operator $M_j$ on qubit $j$ is expressible as
    \begin{align}
        M_j =  \prod_{i=1}^{\gamma_j} g_i^{\alpha_i} \overline{g}_i^{\beta_i},%
        \label{eq:stabDecomp}
    \end{align}
    up to a $\pm1$ or $\pm i$ prefactor, where $g_i$ are stabilizer generators, $ \overline{g}_i$ are corresponding destabilizers,\footnote{We define a destabilizer $\bar{g}_i$ of generator $g_i$ of stabilizer group $\mathcal{S}$ to be a Pauli operator that anti-commutes with $g_i$ and commutes with all other generators $g_{j\neq i}$ of $\mathcal{S}$~\cite{aaronson2004improved}.} $\alpha_i, \beta_i = 0, 1$, and the number $\gamma_j$ of generators needed satisfies
    \begin{align}
        \gamma_j = 2 S_\mathrm{vN}(j) + \mathcal{O}(1), \label{eq:gamma}
    \end{align}
    where $S_\mathrm{vN}(j)$ is the von Neumann entanglement entropy of the subsystem with qubits $1,\dots,j-1$.     
\end{theorem}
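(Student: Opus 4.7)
The plan is to produce a specific choice of stabilizer generators $\{g_i\}$ and flip operators $\{\bar{g}_i\}$ forming a symplectic basis of the $n$-qubit Pauli group (modulo phases) for which the expansion of $M_j$ has weight $\gamma_j = 2S_\mathrm{vN}(j)+\mathcal{O}(1)$. My starting point is the canonical form of Ref.~\onlinecite{fattal2004entanglement} for the stabilizer group $\mathcal{S}$ of $\ket{\psi}$ with respect to the bipartition $A=\{1,\dots,j-1\}$, $B=\{j,\dots,n\}$. This canonical form provides a generating set split into $k_A=|A|-S$ ``pure-$A$'' generators (supported in $A$), $k_B=|B|-S$ ``pure-$B$'' generators, and $2S$ ``entangling'' generators, where $S\equiv S_\mathrm{vN}(j)$; moreover the $A$- and $B$-restrictions of the entangling generators can be normalized into $S$ canonical anti-commuting pairs, one per ``entanglement bond'' across the cut.

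Given the canonical generators, I would select flips consistently with the split: pure-$A$ flips (chosen in $A$) for pure-$A$ generators, pure-$B$ flips (chosen in $B$) for pure-$B$ generators, and mixed flips for the entangling ones. Arranging all flips to mutually commute, $\{g_i,\bar{g}_i\}_{i=1}^n$ becomes a symplectic basis, so any Pauli $P$ has a unique expansion $P=\prod_i g_i^{\alpha_i}\bar{g}_i^{\beta_i}$ (up to a $\pm 1$, $\pm i$ phase) with coefficients fixed by symplectic duality: $\alpha_k=1$ iff $\{P,\bar{g}_k\}=0$, and $\beta_k=1$ iff $\{P,g_k\}=0$. Hence $\gamma_j$ equals the number of basis elements anti-commuting with $M_j$, and the task reduces to bounding that count.

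The count splits by sector. Since $M_j$ acts trivially on $A$, it commutes with every pure-$A$ generator and every pure-$A$ flip, contributing $0$. The pure-$B$ generators form a stabilizer subgroup acting on $B$; by a standard $\mathbb{F}_2$ row reduction within that subgroup one can choose its generators (and matching flips) so that at most one generator and at most one flip anti-commute with the single-qubit operator $M_j$, contributing $\mathcal{O}(1)$. The entangling sector contains $2S$ generators and $2S$ flips; because $M_j$ has no support in $A$, only the $B$-restrictions are relevant, and within each of the $S$ canonical pairs the $B$-restrictions are anti-commuting Paulis on $B$. Exploiting the residual canonical-form freedom to redefine each pair modulo the pure-$B$ stabilizer, one arranges that at most one generator and at most one flip per pair anti-commute with $M_j$, yielding at most $2S$ anti-commutations. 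Combining sectors gives $\gamma_j = 2S + \mathcal{O}(1)$, establishing Eq.~(\ref{eq:gamma}).

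The step I expect to be the main obstacle is this final ``pair normalization'' in the entangling sector: tightly bounding its contribution requires carefully exploiting (i) the symplectic complementarity of the entangling and pure-$B$ sectors inside $P_B/\mathrm{phase}$ and (ii) the freedom to redefine mixed flips modulo stabilizers. Turning these into a deterministic recipe that caps each canonical pair at one anti-commutation with the specific single-qubit $M_j$ is essentially symplectic Gaussian elimination over $\mathbb{F}_2$, but the bookkeeping is delicate and is where the canonical-form machinery of Ref.~\onlinecite{fattal2004entanglement} does the real work; the rest of the plan---the unique symplectic-basis expansion and the elementary pure-$B$ row reduction---is mechanical.
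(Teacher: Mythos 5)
Your proposal is correct and rests on the same foundation as the paper's proof: the canonical form of Ref.~\onlinecite{fattal2004entanglement} for the bipartition at qubit $j$, with the $2S_\mathrm{vN}$ generators straddling the cut supplying the dominant contribution and the local sectors supplying only $\mathcal{O}(1)$. Two technical choices differ. First, for the pure-$B$ sector the paper does not row-reduce the anti-commutation functional; instead it compares the cuts placed on either side of qubit $j$ and observes that at most three independent $B$-local generators can act nontrivially on qubit $j$ (any two acting with the same Pauli there can be multiplied into one that acts trivially), so only $\mathcal{O}(1)$ of them are relevant to $M_j$ --- a support-based count rather than a commutation-based one, though the two are morally the same Gaussian-elimination move. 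Second, the paper does not require the flip operators to form a mutually commuting half of a symplectic basis: it argues directly that flips of generators not supported on qubit $j$ cannot appear (since $M_j$ cannot flip a generator it commutes with), and that any ``tails'' of remaining local generators attached to the contributing flips can be absorbed by redefining generators. This sidesteps the bookkeeping you flag as your main obstacle. In fact that obstacle is avoidable within your own framing as well: $\gamma_j$ counts generator \emph{indices}, and the entangling sector contains only $2S_\mathrm{vN}$ indices, so it contributes at most $2S_\mathrm{vN}$ to $\gamma_j$ with no pair normalization whatsoever; the only sector genuinely requiring an argument is the pure-$B$ one, where either your row reduction or the paper's two-cut support count suffices.
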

Using Theorem~\ref{thm:gamma} (setting $M_j= Z_j$), we find that monitor measurement operator $Z_j$ is expressible in terms of $\gamma_j$ generators and their corresponding $\gamma_j$ destabilizers. 
Different monitors may have different $\gamma_j$.
Here, we focus on a regime with a volume- or area-law $S_\mathrm{vN}(j)$; thus, each $\gamma_j$ has the same scaling with the system size $n$.
We shall be interested in this scaling thus we take $\gamma_j= \gamma$ for all $j = j_1 , \dots , j_{w}$ for simplicity (taking $\gamma=\max_j{\gamma_j}$ or $\gamma=\min_j{\gamma_j}$ allows for probability bounds).

We assume the monitor is a uniformly random choice from these $2^{2\gamma_j}-1$ operators; this becomes increasingly true upon increasing $k$. 
Counting the SP-favorable cases conditioned on previous measurements not stabilizer-purifying ($\textrm{NSP}$)---a monitor cannot stabilizer-purify if a previous one already has---thus yields (see also App.~\ref{app:bulkMons})
\begin{align}
    \mathbb{P}(Z_j \textrm{ SP} |\,\textrm{prev. NSP}) = \frac{3}{2}\frac{2^{\gamma_j}}{4^{\gamma_j} -1}.
\end{align}%
Continuing for all the $pw$ potentially purifying monitors in the $k^{\textrm{th}}$ layer, approximating $\gamma_j = \gamma$, and thus denoting $\mathbb{P}(Z_j \textrm{ SP} |\,\textrm{prev. NSP}) \equiv f$,  we find
(see also App.~\ref{app:bulkMons}) 
\begin{align}
    \mathcal{P} \equiv \mathbb{P}(d^* = k |\,k'<k \, \textrm{NSP}) \approx 1- \left( 1 - f  \right)^{pw}, \label{eq:fDef}
\end{align}%
conditioned on previous layers not stabilizer-purfiying. (The result is $k$-independent since we took constant $w$.)%

From $\mathbb{P} (d^* = 1) = p_1$, and in terms of the exact value of $\mathcal{P}$ we have $\mathbb{P}(d^* = k) = (1-p_1)\mathcal{P}(1-\mathcal{P})^{k-2}$  for $k\geq2$. Hence, from
\begin{align}
    \mathbb{P} (d\geq d^* >1) &= \sum_{k=2}^d \mathbb{P}(d^* = k)
\end{align}%
we have the exact relation
\begin{align}
    \mathbb{P}(\textrm{SP}) &= p_1 +(1-p_1)\left[ 1- (1-\mathcal{P})^{d-1} \right]\\
    &= 1 - (1-p_1)(1-\mathcal{P})^{d-1}, \label{eq:SPprob}
\end{align}
using which we can estimate $\mathbb{P}(\textrm{SP})$ via Eq.~\eqref{eq:fDef}. 
\subsubsection{Stabilizer-purification time and entanglement}
As $d$ becomes large, Eq.~\eqref{eq:SPprob} implies $1-\mathbb{P}(\textrm{SP})\propto e^{-\Gamma d}$, with decay rate $\Gamma=-\ln(1-\mathcal{P})$. This allows us to define the stabilizer-purification time  $\tau_{\textrm{SP}}=\Gamma^{-1}$. 
By Eq.~\eqref{eq:fDef}, 
\begin{align}\label{eq:tau_SP}
    \tau_{\textrm{SP}}\approx \frac{-1}{p w\ln(1-f)},\quad f=\frac{3}{2}\frac{2^{\gamma}}{4^{\gamma} -1},
\end{align}
where we reminded of the definition of $f$.

We can now use $\tau_{\textrm{SP}}$ to assess what area- and volume-law EE implies about there being $T$ purifiers and thus SP. We shall use Theorem~\ref{thm:gamma} to infer the scaling of $\gamma$ and thus of $\tau_{\textrm{SP}}$ with the system size $n$. 

In a volume-law phase, $S_\mathrm{vN}(\rho_B) \propto |B|$ for any subsystem $B$ and the subsystems relevant to Theorem~\ref{thm:gamma} have $|B| \propto n$. Thus, $\gamma \propto n$, and 
\begin{align}\label{eq:t_SP_vol}
    \tau_\text{SP} \sim \frac{\exp (n)}{p w} .
\end{align}
Therefore, using $1\leq w \leq n$, we conclude $\tau_\text{SP}=\exp (n)$ in the volume-law phase.
Thus, as each TCB has depth $d=\text{poly}(n)$ [otherwise $D\neq\text{poly}(n)$], finding $T$ purifiers is unlikely, and, over the full circuit, magic accumulates.

Conversely, in an area-law phase, $S_\mathrm{vN}(\rho_B) = \mathcal{O}(1)$ for any contiguous subsystem $B$; thus, $\gamma = \mathcal{O}(1)$.
Hence, by $w_0 \leq w \leq n$, we find that $\tau_{\textrm{SP}} = \mathcal{O}(1/w_0)$ is at most a constant: we have $\tau_\textrm{SP} = \mathcal{O}(1)$ and $\tau_\textrm{SP} = \mathcal{O}(n^{-1})$, for $w_0 = \mathcal{O}(1)$ and $w_0 = \mathcal{O}(n)$, respectively.
Now, as each TCB has depth $d=\text{poly}(n)\gg \mathcal{O}(1)$, it is almost sure that each TCB is a $T$ purifier, thus magic cannot accumulate. 

These scalings of $\tau_{\textrm{SP}}$ with $n$ in area- and volume-law phases match that of the (entropy) purification time in Ref.~\onlinecite{gullans2020purification} for pure and mixed phases, respectively.

\subsubsection{Numerical test}
We test our predictions for the SP time via a numerical experiment. 
We take a circuit on $n$ qubits (initialized in a computational basis state) that consists of (i) a depth $n^2$ brickwork of random $2$-qubit Clifford gates and monitors and then (ii) a TCB
of varying depth $d$ with random $2$-qubit Cliffords and monitors. Monitors are sampled independently with probability $p$. 
The circuit block before the $T$ gate generates a random stabilizer state ($\ket{\psi_\text{in}}$ in our above construction, cf. Fig.~\ref{fig:TCB}) with volume- or area-law entanglement depending on $p$, while the TCB probes whether SP occurs. 
Specifically, we are interested in numerically estimating $\Gamma$ and thus $\tau_{\textrm{SP}}=\Gamma^{-1}$.

Our simulations agree with the expectations: 
in a volume-law phase (e.g., for $p=0.1<p_c^{\textrm{EE}} \approx 0.16$ as in Fig.~\ref{fig:SP_Prob_volume_law}) we find that $1-\mathbb{P}(\mathrm{SP})$ decays exponentially with $d$ and that $\Gamma \sim \exp(-n)$ (Fig.~\ref{fig:SP_Prob_volume_law} inset), both consistent with Eq.~\eqref{eq:t_SP_vol}.
Conversely, in an area-law phase (e.g., for $p=0.2, 0.4 > p_c^{\textrm{EE}} $ as in Fig.~\ref{fig:SP_Prob_area_law}) we find that $\mathbb{P}(\mathrm{SP})$ saturates to $1$ in a depth $d$ independent of $n$ and decreasing with $p$, as predicted by Eq.~\eqref{eq:tau_SP} with $\gamma = \mathcal{O}(1)$. 
\begin{figure}[t]
    \centering
        \includegraphics[width=8.6cm]{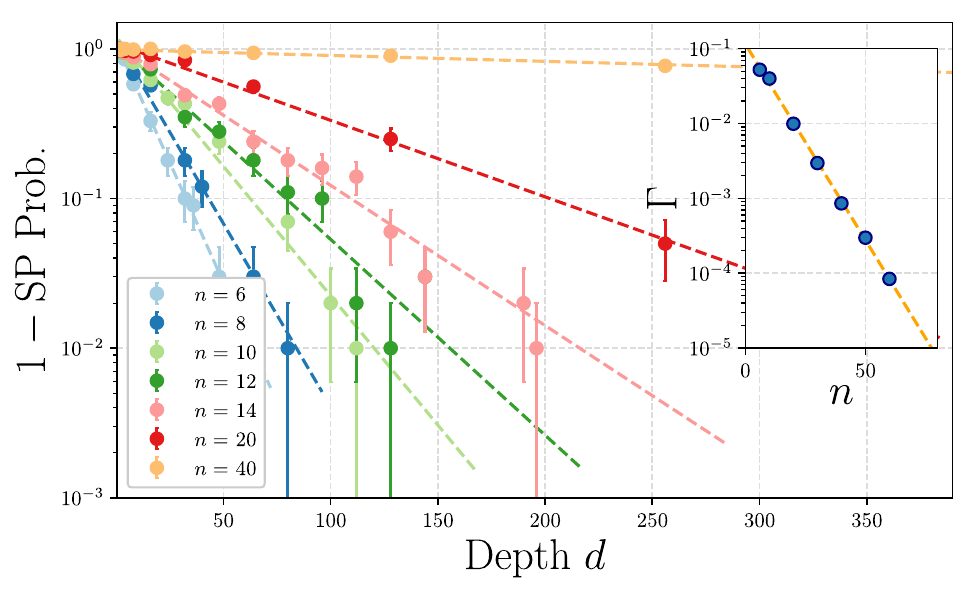}
    \caption{
                Stabilizer-purification probability $\mathbb{P}(\textrm{SP})$ as a function of TCB depth $d$ for $p=0.1$ (volume-law phase), with SE as error bars. 
                The dashed lines are fits of $c e^{-\Gamma(n) d}$ to $1-\mathbb{P}(\textrm{SP})$, which agree with the analytical considerations. 
                \textit{Inset}: Dependence of the decay rate $\Gamma (n)$ on system size $n$. 
                The fit $\exp(-n)$ (orange dashed line) agrees well with our theory.
            }
    \label{fig:SP_Prob_volume_law}
\end{figure}
\begin{figure}[t]
    \centering
        \includegraphics[width=8.6cm]{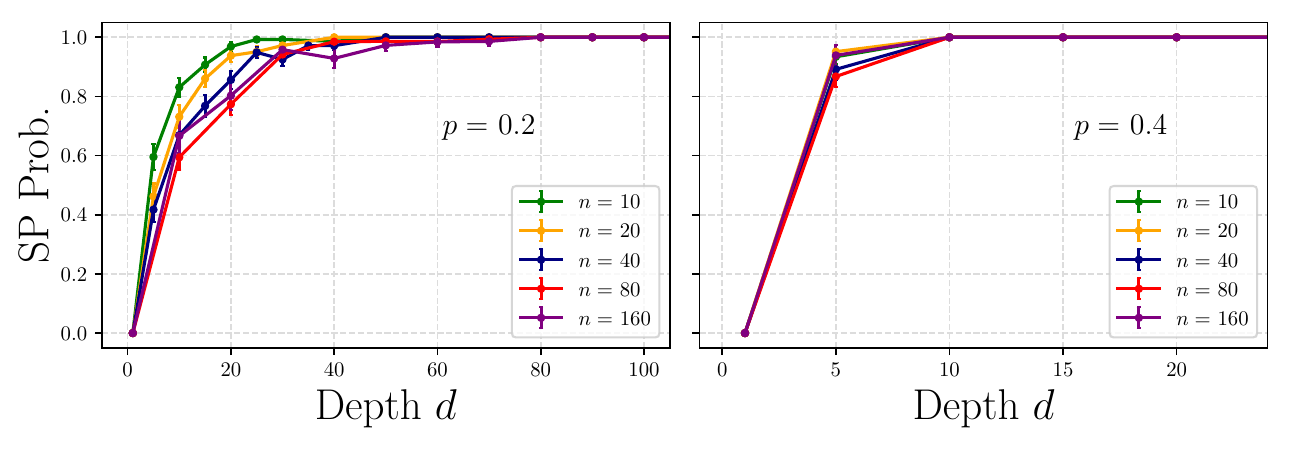}
    \caption{
                Stabilizer-purification probability $\mathbb{P}(\textrm{SP})$ as a function of TCB depth $d$ for $p=0.2$ (left) and $p=0.4$ (right), so in the area-law phase. 
                (Error bars: SE.)
                In both panels, $\mathbb{P}(\textrm{SP})$ saturates to 1 in a depth $d$ independent of the system size $n$ and decreasing with larger $p$, which matches our expectations.
            }
    \label{fig:SP_Prob_area_law}
\end{figure}

\subsection{SP probability implications for \texorpdfstring{$\mathrm{\textsf{CPX}_{\mathrm{\bf{PBC}}}}$}{} \label{sec:SPtime_implications}}
In this section, having built some intuition for $\tau_\text{SP}$, we focus on its implications for magic transitions for fixed $q$ or $qD$. 
We revisit the numerical results suggesting a transition for fixed $qD$ from Sec.~\ref{sec:CPXtrans} and use SP to explain them.
Then, we present numerical results suggesting the absence of a simulability transition for fixed $q$ below the percolation threshold, and show that this is expected from our analytical considerations. 
These also suggest that SP is the leading mechanism for MF and the existence of a magic transition.

\subsubsection{Fixed \texorpdfstring{$q$D}{} \label{sec:SP_AC}}
Let us first consider fixed $qD = \mathcal{O}(1)$. 
For concreteness, we take $D = c n^a$ to leading order in $n$, with $a > 0$ and $c$ a constant.
Since $q$ is the spacetime density of $T$ gates, each $T$ gate is typically the sole occupant of an $\mathcal{O}(1/q) = \mathcal{O}(n^{a})$ spacetime volume.
Let us define the expected radius of one of these regions to be $r_\text{exp} = \mathcal{O}(n^{a/2})$. We also define the expected number of layers between $T$ gates as $d_\text{exp} = \mathcal{O}(n^{a-1})$.
Since these capture the separation between causally connected $T$ gates for $w_0 = \mathcal{O}(1)$ and $w_0 = \mathcal{O}(n)$, respectively, we expect the simplified model from Sec.~\ref{sec:SPTime_purifier} to apply with $d=d_\text{exp}$ for $w_0 = \mathcal{O}(n)$, and with $d = r_\mathrm{exp}$ for $w_0 = \mathcal{O}(1)$.
We use this to show that, for uncorrelated monitoring and fixed $qD = \mathcal{O}(1)$, we expect a magic transition, evidenced by $\mathrm{\textsf{CPX}_{\mathrm{PBC}}}$, coinciding with the entanglement transition.

In the volume-law phase, $\tau_{\textrm{SP}} = \exp (n)\gg D$ [Eq.~\eqref{eq:t_SP_vol}].
Therefore, since $d_\text{exp}/\tau_\mathrm{SP} = \exp(-n)$, magic from many $T$ gates spreads and builds up in the system before any one of them could have its magic removed by monitor measurements; PBC features a MSR block of size $\propto qDn \propto n$ to simulate (cf. proof of Theorem~\ref{thm:PBC}), thus a hard phase.

Conversely, the area-law phase has $\tau_{\textrm{SP}} = \mathcal{O}(1/w_0)$.
We can focus on cones with constant-width apex since stabilizers are generically local.\footnote{Local stabilizers are typical since we start from a product state and the frequent monitors keep the state in the area law phase. Even if an atypical nonlocal generator occurs, leading to a cone with an $\mathcal{O}(n)$ apex, it may SP even faster in time $\tau_\mathrm{SP} = \mathcal{O}(1/n)$ if $w_0 = \mathcal{O}(n)$. In this case, $T$ gates also SP individually since $d_\mathrm{exp}/ \tau_\mathrm{SP} = \mathcal{O}(n^a)$, for $n \gg 1$ and $a>0$.}
Hence, we apply the model from Sec.~\ref{sec:SPTime_purifier} with $d = r_\mathrm{exp}$ and $\tau_\textrm{SP}=\mathcal{O}(1)$, thus $d/\tau_\textrm{SP} = \mathcal{O}(n^{a/2})$.
Hence, for $n\gg1$ and $a>0$, each $T$ gate is individually stabilizer-purified independently of all other $T$ gates. %
Thus, the magic from the $T$ gates in the bulk of the circuit is rapidly purified; in PBC these contribute $\mathcal{O}(1)$-weight MSR fragments. 
The magic in the final state comes only from the $\sim qn\tau_{\textrm{SP}} =(qD)n\tau_{\textrm{SP}}/D= \mathcal{O}(n^{1-a})$ $T$ gates within $\tau_{\text{SP}}$ from the end of the circuit.
These $T$ gates are typically spacelike separated from each other (for constant $w_0$), thus in PBC they yield $\mathcal{O}(1)$-weight MSR fragments (see Sec.~\ref{sec:CpxProxy}).
Thus, MF occurs, leading to an easy phase.

Hence, we expect EE and magic transitions to coincide in this regime. 
The mechanism presented here explains the magic transition discussed in Sec.~\ref{sec:CPXtrans}: we have argued SP drives MF, which is related to an easy phase in terms of $\mathrm{\textsf{CPX}_{\mathrm{PBC}}}$.
We identify $\tau_\text{SP}$ as the emergent time scale generalizing a correlation length for the transition out of the easy phase, where $\tau_\text{SP}$ diverges upon approaching the transition as $p \to p_c^+$.

\subsubsection{Fixed \texorpdfstring{$q$}{} \label{sec:qCPX}}
As we now explain, for fixed $q=\mathcal{O}(1)$ and uncorrelated monitoring, we expect no magic transition below the percolation transition $p_c^\text{TN}$
(for $p>p_c^{\text{TN}}$ we find an easy phase; cf. Sec.~\ref{sec:perc} and App.~\ref{app:SpacetimePart}).
In each circuit layer, $qn = \mathcal{O}(n)$ $T$ gates occur on average, and these increase the number of logical qubits encoded in the corresponding effective stabilizer code up to $\mathcal{O}(n)$.
We focus on $p<p_c^{\text{TN}}$ in the area-law phase with $q$ suitably small such that the apexes of the CCMs of the $T$ gates do not overlap (a hard phase in this regime would suggest one for larger $q$, or smaller $p$, or both). 
Since $r_\mathrm{exp} = \mathcal{O}(1/\sqrt{q}) = \mathcal{O}(1)$ here, $r_\mathrm{exp} / \tau_\mathrm{SP} = \mathcal{O}(1)$ does not increase with $n$, thus, a finite fraction of $T$ gates are not stabilizer-purified and can compound their magic in the MSR. 
Hence, we generically expect $\mathcal{O}(n)$ logical qubits to build up and persist throughout the evolution, and a MSR block of size $\sim qDn=\text{poly}(n)$ on which to simulate measurements, leading to a hard PBC phase irrespective of EE.\footnote{This argument cannot rule out an easy phase for $\sqrt{1/q}/\tau_\mathrm{SP} \gg 1$ where CCMs of nearby $T$ gates are unlikely to overlap before SP.
However, testing this, via the scaling with $n$, is numerically challenging since $\tau_\mathrm{SP}$ can be appreciable and thus avoiding finite-size effects from $q \ll 1/\tau^2_\mathrm{SP}$ requires $n$ beyond our reach.}
\begin{figure}[t]
    \centering
        \includegraphics[width=8.6cm]{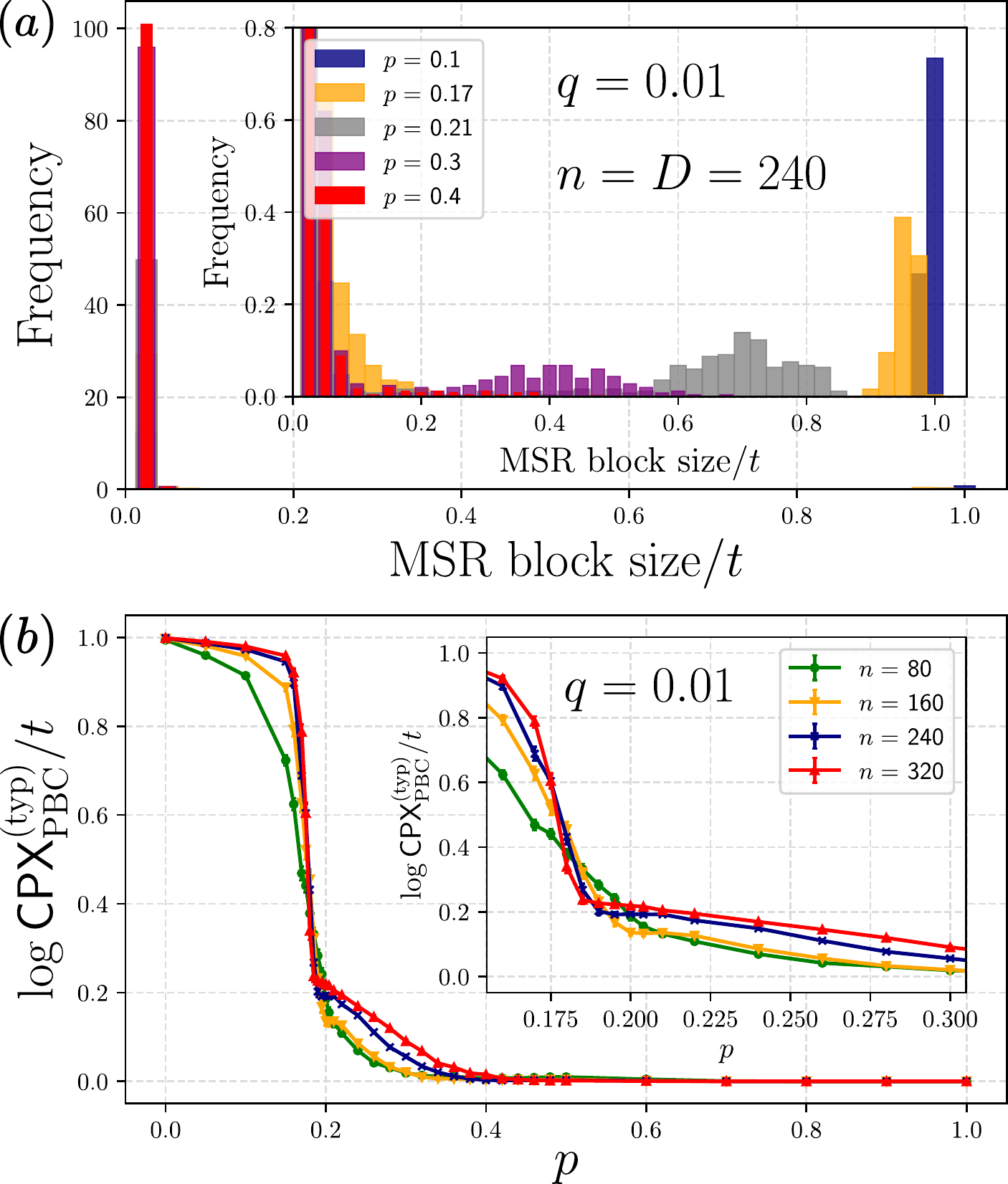}
    \caption{
                Absence of a magic transition for fixed $q$ below the percolation threshold $p_\mathrm{c}^\textrm{TN} = 0.48$. 
                $(a)$: Both in the volume- and area-law phase, the distribution of magic state register block sizes suggests a typical size of at least $\mathcal{O}(n)$, leading to a hard phase. The inset more closely shows the histogram of block sizes averaged over 300 realizations.
                $(b)$: The order parameter $\log \mathrm{\textsf{CPX}}^{(\mathrm{typ})}_{\mathrm{PBC}}/t$ versus measurement probability $p$ (with SE as error bars), where the runtime proxy for simulating a circuit by the PBC method is $\mathrm{\textsf{CPX}}_{\mathrm{PBC}}$, and $t$ is the number of $T$ gates. 
                The order parameter remains finite as the entanglement transition is encountered, and it increases with $n$ even in the area law below the percolation threshold.
                The inset shows data closer to the entanglement transition, suggesting the absence of a magic transition.
            }
    \label{fig:q_CPX}
\end{figure}
We next show numerical evidence for a hard phase for $p < p_\mathrm{c}^\textrm{TN}$, focusing on the circuits in Sec.~\ref{sec:CPXtrans}, but now with $qD\sim D$.
As shown in Fig.~\ref{fig:q_CPX}$(a)$ and its inset, although the distribution of MSR block sizes shifts towards lower values as  $p$ increases, it remains peaked at a block size proportional to the total MSR size $t = \mathrm{poly}(n)$; this suggests MF does not occur. 
Looking at the simulability order parameter, we observe it crosses over from a hard and volume-law entangled phase to a hard and area-law entangled phase at $p\leq p_\mathrm{c}^\mathrm{EE}\approx0.17$ (the Haar-random value~\cite{zabalo2020EEpc}), cf. Fig.~\ref{fig:q_CPX}$(b)$ and its inset. 
Even though the order parameter significantly decreases in the area-law phase, it remains finite upon increasing $n$; the hardness of simulation persists until $p_\mathrm{c}^\mathrm{TN}=0.48$. 
These results suggest that 
(i) the absence of SP leads to no MF and no magic transition for $p<p_\mathrm{c}^\mathrm{TN}$, 
and (ii) the magic and entanglement transitions are distinct.
\subsection{SP implications for direct stabilizer simulations \label{sec:SPdirect}}
While thus far we mostly linked SP to PBC simulations, here
we explain that SP also implies easy stabilizer simulations for the original circuit. Concretely, we show that if at most $\mathcal{O}(\log n)$ $T$ gates occur per layer, and the magic from each $T$ gate is stabilizer-purified in $\mathcal{O}(1)$ time (or vice versa), then stabilizer simulation is easy.

Consider a circuit as that from Fig.~\ref{fig:circuit}$(a)$. 
Under our assumptions, at any point in the time-evolution, the $T$ gates whose magic has not yet been stabilizer-purified encode $\mathcal{O}(\log n)$ logical qubits. 
This implies that, at any point, the system is in a superposition of $\exp[\mathcal{O}(\log n)]=\text{poly}(n)$ stabilizer states; keeping track of these via stabilizer methods over depth $D=\text{poly}(n)$ takes $\text{poly}(n)$ classical runtime and memory. 
Hence, by simulating the time-evolution via stabilizers, we efficiently find the exact state $\ket{\psi_\text{out}}$ before the final computational basis measurements.  
As $\ket{\psi_\text{out}}$ is a superposition of $\text{poly}(n)$ stabilizer states, weak or strong simulation can be done efficiently.
\section{\texorpdfstring{$\bm{T}$}{}-correlated Monitoring}
\label{sec:MonGame}
We next discuss a model where correlations between $T$ gates and monitors facilitate SP, thus enabling a magic transition within a volume-law phase. 
Thus, the magic transition is now a simulability transition, occurring without a phase transition in EE. 

We shall use the $T$-correlated monitoring model (see Sec.~\ref{sec:InfMon})
with the circuit depicted in Fig.~\ref{fig:circuit}$(a)$. 
In this model, we consider the conditional probability $p_+ = \mathbb{P}(Z_j|T_j)$ of applying a $Z_j$ monitor given there is a $T$ gate $T_j$ on qubit $j$ directly preceding it, and the conditional  $p_- = \mathbb{P}(Z_j| \textrm{no } T_j)$ for there being no directly preceding $T_j$.
The probability of a $T$ gate is still $\mathbb{P}(T_j)=q$ independently for each qubit $j$, and we still have 
\begin{equation}\label{eq:pm_prob}
\mathbb{P}(Z_j)=p_+ q + p_- (1-q) = p,
\end{equation}
independently for each qubit, for the total probability of applying a monitor $Z_j$. 
However, we can now have $p_+\neq p_-$: monitors can be correlated with $T$ gates. [For $p_+=p_-$, Eq.~\eqref{eq:pm_prob} implies $p_+=p_-=p$ and we recover the uncorrelated monitoring model from Sec.~\ref{sec:CPXtrans}.]

In what follows we parameterize  $p_+ = p_- + \alpha$, with $\alpha$ independent of $q$, and take $\alpha\geq 0$. (In terms of a monitoring observer, cf. Sec.~\ref{sec:InfMon}, this expresses the aim to stabilizer-purify the $T$ gates; $\alpha<0$ would mean monitoring while trying to avoid SP.)
From Eq.~\eqref{eq:pm_prob} we have $p_- = p - \alpha q$, thus we recover $p_- = p$ as $q \to 0$, i.e., the correct limit without $T$ gates ($p_+$ plays no role for $q=0$).
From $0\leq p_\pm \leq 1$ we find $0\leq \alpha\leq \text{min}(\frac{1-p}{1-q},\frac{p}{q} )$.
For concreteness, we focus on $p\geq q$; in this case the upper bound is $\alpha_\text{max}=\frac{1-p}{1-q}$.

We start with the limits $\alpha=0$ and $\alpha=\alpha_\text{max}$.  
For $\alpha = 0$, we recover the uncorrelated monitoring model from Sec.~\ref{sec:CPXtrans}. 
If $p<p_c^\text{EE}$, i.e., the system is in a volume-law phase, then $\alpha=0$ yields a hard to simulate phase for any nonzero $q$, cf. Sec.~\ref{sec:SPtime_implications}. 
For $\alpha = \alpha_\text{max}$, we find $p_+ = 1$. 
This ``perfect monitoring" limit is easy to simulate since the magic from each $T$ gate is immediately stabilizer-purified in each monitoring round.
This holds for any $q$ [provided $q\leq p$, as required for $\alpha = (1-p)/(1-q)$ to be consistent], including in the volume-law phase. 
\begin{figure}[t]
    \centering
        \includegraphics[width=8.6cm]{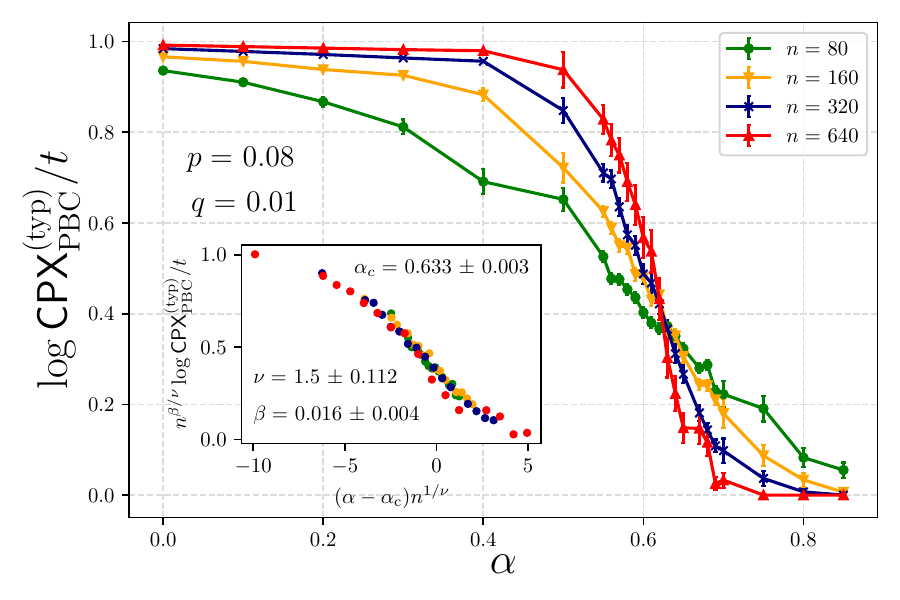}
    \caption{
                Dynamical magic phase transition for $T$-correlated monitoring. 
                The probability for applying a $T$ gate is $q=0.01$, while for monitors, it is $p=0.08$, and $\alpha$ parametrizes their correlation. 
                As $p<p_\mathrm{c}^\mathrm{EE} \approx 0.16$, the system is in a volume-law phase.
                We plot the order parameter $\log \mathrm{\textsf{CPX}}^{(\mathrm{typ})}_{\mathrm{PBC}}/t$, with SE as error bars. 
                The order parameter drops from a finite value for $\alpha<\alpha_\mathrm{c}$ to zero (as $n\to\infty$) for $\alpha>\alpha_\mathrm{c}$.  
                The inset shows a finite-size scaling collapse revealing the critical $\alpha_\mathrm{c}=0.633 \pm 0.003$, well below the perfect monitoring value $\alpha_\text{max}=(1-p)/(1-q)\approx 0.93$ for $p=0.08$, $q=0.01$.
            }
    \label{fig:alpha_CPX}
\end{figure}
As we reduce $\alpha$ from  $\alpha_\text{max}$, we expect an easy to simulate phase to persists, at least for sufficiently small $q$, even if $q$ is $n$-independent. 
To test this, we perform a numerical experiment similar to that in Sec.~\ref{sec:CPXtrans}, now focusing on the volume-law phase. 
Our numerical results, illustrated in Fig.~\ref{fig:alpha_CPX} for $p=0.08$ and $q=0.01$, suggest that both easy and hard phases are stable, and there is a magic transition, which is now a simulability transition, separating them, despite the system being in the volume-law phase. 
The phase transition is continuous, as corroborated by the scaling collapse, cf. Fig.~\ref{fig:alpha_CPX} inset. 
We find a critical value of $\alpha = \alpha_\mathrm{c} = 0.633 \pm 0.003$, well below $\alpha_\text{max}\approx 0.93$ for $p=0.08$ and $q=0.01$. 
\section{Discussion and Outlook \label{sec:Discussion}}
We have studied how the dynamics of magic in random monitored Clifford+$T$ circuits impacts classical simulability and, in particular, how monitoring measurements may lead to the spreading of magic becoming arrested (and indeed magic being removed) by a process we dubbed stabilizer-purification (SP).
We used PBC
to quantify the role of magic in classical simulabilty, and identified magic fragmentation (MF),
linked to SP, as the key phenomenon behind the transition from hard to easy PBC phases. 
Concretely, we showed that SP implies MF in a simplified model in Theorem~\ref{thm:PBC}, argued how this extends to our circuit model, and provided numerical evidence supporting these claims; 
this leaves the establishment of a more formal link between SP and MF in a broader class of circuits as an interesting problem for the future.
The dynamics of magic, and the concepts of SP and MF, open new avenues for investigating phase transitions in the complexity of simulating quantum circuits beyond the paradigm of entanglement. 
Here, we showed that they can lead to a simulability transition within a volume-law entangled phase (as exemplified by $T$-correlated monitors), but also found scenarios where the magic transition occurs within an area-law phase (fixed $q$) or it coincides with the entanglement transition (fixed $qD$).
While approximate simulation is always possible in an area-law phase owing to MPS methods~\cite{vidal2003efficient,vidal2004efficient}, exact efficient simulation requires the Hartley entropy to obey an area law~\cite{schuch2008entropy}, which occurs only above the critical probability we have called $p_c^{\text{TN}}$ (cf. App.~\ref{app:clusters}).
By varying the number of $T$ gates in our model, one could interpolate between Clifford and universal circuits, potentially approaching Haar-random circuits
(since circuits with more $T$ gates can form higher unitary $k$-designs~\cite{qHomeopathy}). 
Taking the perspective of a monitoring observer introducing the correlations between $T$ gates and measurements, it would be intriguing to study how much steering~\cite{morral2023entanglement} and learning~\cite{ChargeSharpening2022} capacity the observer has. 

The metric $\mathrm{\textsf{CPX}}_{\mathrm{PBC}}$, behind our simulability order parameter, is also a magic metric. 
It is not a magic monotone however (since it can increase under Clifford gates); instead it quantifies the amount of magic that has \textit{spread} in the circuit.
This spread is essential for quantum advantage.
To illustrate this with an extreme example, consider the state $\ket{\psi}=(T\ket{+})^{\otimes n}$. 
Some magic metrics would indicate there is an extensive amount of magic in $\ket{\psi}$, yet sampling from its output distribution---requiring only $n$ independent coin tosses---is clearly a classically easy problem. 
The problem is easy because in $\ket{\psi}$ the magic from each $T$ gate is localized to its respective qubit.
The metric $\mathrm{\textsf{CPX}}_{\mathrm{PBC}}$, by capturing the (de)localization of magic, identifies such cases as classically efficiently simulable. 
By detecting the magic that can yield quantum advantage, we may view $ \mathrm{\textsf{CPX}}_{\mathrm{PBC}}$ as a metric for ``operational magic".

This metric also avoids the post-selection problem of measurement-induced phase transitions~\cite{IppKhemaniPost,LiPostPRL,naturePOST2023,garratt2023probing, mcginley2023postselectionfree} since $\mathrm{\textsf{CPX}}_{\mathrm{PBC}}$ does not depend on the measurement outcomes: 
If the outcome of a mid-circuit monitor is changed, this can at most change the signs, but not the structure, of measurement operators in PBC, hence leaving $\mathrm{\textsf{CPX}}_{\mathrm{PBC}}$ unaffected.
Also, $\mathrm{\textsf{CPX}}_{\mathrm{PBC}}$ of a circuit instance can be computed classically in $\mathrm{poly}(n,t)$ time; hence, the order parameter could be used as a diagnostic of magic transitions,  and for $qD=\mathcal{O}(1)$ also of entanglement transitions, accessible to classical simulators. 
The stabilizer-impure phase can be interpreted as a non-stabilizer state encoded in a dynamically generated stabilizer code~\cite{choi2020qecc,gullans2020purification,Hastings2021dynamically}. 
As we saw in Sec.~\ref{sec:SPTgate}, already a single $T$ gate can yield such an encoding, provided it increases the stabilizer rank.
Clifford gates and measurements, which do not stabilizer-purify, dynamically modify this logical subspace by updating the stabilizer generators. 
Monitors that stabilizer-purify act as logical errors
decreasing the logical subspace's dimension; thus, they compete with the encoding $T$ gates. 
This picture proved fruitful for our discussion, and it would be interesting to see whether a quantum error correcting code perspective would allow statistical mechanical mappings~\cite{DennisKitaev, chubb2021statistical,SMofQECC,VBB2023,BVB2023}, that could give a complementary understanding of the dynamics of magic, SP, and magic transitions. 
Such a mapping might allow one to contrast the universality class of the magic transition to that of entanglement.
These may prove to be the same for uncorrelated monitoring with $qD=\mathcal{O}(1)$, since the values of $p_\text{c}\approx 0.159$ and $\nu \approx 1.23$ we found are consistent with those for entanglement transitions in Clifford circuits~\cite{zabalo2020EEpc}.

We may also view $T$ gates as coherent errors on an encoded stabilizer state~\cite{greenbaum2017modeling,bravyi2018correcting,HuangDohertyFlammia19,iverson2020coherence,VBB2023,BVB2023,magic8pt2}; from this viewpoint magic is a coherent, pure-state analog of the entropy that would come from Pauli channels. 
This suggests interesting directions in the entropy-purification settings~\cite{gullans2020purification}.
In particular, in our setups magic is injected throughout the time-evolution; this does not directly correspond to the original dynamical purification setup~\cite{gullans2020purification}, where all entropy is injected at the start (i.e., the input is a maximally mixed state). 
Although the purification and entanglement phase transitions were found to coincide in (1+1)D and (2+1)D for Clifford circuits~\cite{gullans2020purification, Purif2D, MIPTs_d_plus_one}, where they can be mapped to the same statistical mechanics model~\cite{CFTmapping, jian2020mipt}, these two transitions might generically differ.
Building on our settings, it would be appealing to attempt separating the purification and EE transitions by having a pure input state and dynamically mixing the state (i.e., decreasing the state's purity mid-circuit). 
Entanglement in the MSR can also display signatures of the dynamics of magic in the original circuit.
Consider the entanglement in the MSR after all the monitor measurements.
If MF occurs, then this final MSR state has non-overlapping $\mathcal{O}(1)$-weight measurements; when these are local (as they are for magic states ordered lexicographically following how $T$ gates occur, and by SP such that the original circuit regularly has layers with stabilizer output), the MSR obeys the area-law:  
the MSR is in an area-law stabilizer state, possibly tensored with a local $\mathcal{O}(1)$-sized unmeasured MSR block.
Conversely, if MF does not occur, the MSR is expected to obey a volume-law since most measurements have support on an extensive number of magic states.
This suggests that magic and entanglement transitions may be unified if one focuses on the entanglement properties in the MSR.
Exploring this is another interesting direction for the future.

\textit{Note added:} Independently of this work, Fux, Tirrito, Dalmonte, and Fazio~\cite{FTDF23} also studied magic and entanglement transitions in a similar setup. Using a different approach, they also find that these transitions occur for different $p$ in general. However their approach suggests separate transitions also for $qD=\mathcal{O}(1)$ with uncorrelated monitoring, whereas in that case we find simultaneous  transitions, cf. Secs.~\ref{sec:CPXtrans}, \ref{sec:SPtime_implications} and App.~\ref{app:add_qD}. 

\begin{acknowledgments}
We thank Jan Behrends, Matthew Fisher, Richard Jozsa, Ioan Manolescu, Sergii Strelchuk, Florian Venn, and Peter Wildemann for useful discussions.
This work was supported by EPSRC grant EP/V062654/1 and two EPSRC PhD studentships.
Our simulations used resources at the Cambridge Service for Data Driven Discovery operated by the University of Cambridge Research Computing Service (\href{www.csd3.cam.ac.uk}{www.csd3.cam.ac.uk}), provided by Dell EMC and Intel using EPSRC Tier-2 funding via grant EP/T022159/1, and STFC DiRAC funding (\href{www.dirac.ac.uk}{www.dirac.ac.uk}).
\end{acknowledgments}

\appendix
\section{Details of the PBC method \label{app:PBC_Details}}

In this Appendix, we provide details for the PBC method we use for simulating our Clifford$+T$ circuits (acting on an $n$-qubit register $\mathcal{R}_n$), which is based on Refs.~\onlinecite{Bravyi_PBC2016, Yoganathan2019QuantumAO}.
Specifically, we explain how measurements can be restricted to act only non-trivially on the magic state register ($\mathcal{R}_t$).

As explained in Sec.~\ref{sec:CpxProxy}, we start with a monitored circuit with random Clifford gates and $t$ applications of the $T$ gate. 
We then replace all $T$ gates with magic state gadgets, each using a magic state ancilla $\ket{A}$ to inject the $T$ gate into the circuit (Fig.~\ref{fig:PBC_fig}).
After doing this, we commute all Clifford gates past all measurements, performing updates $M\mapsto C^\dagger M C$ for measurement operators $M$ and Clifford gates $C$. 
Once the Clifford gates are thus commuted to the end of the circuit, they can be deleted.

Let $M_i$ denote the measurement operators resulting from this process. 
To this set of measurements we also add a series of ``dummy measurements" to the start of the circuit, which are simply $Z$ measurements on all qubits in the computational register.
Owing to the initial state $\ket{0}^{\otimes n}$ of this register, these dummy measurements produce outcomes $+1$ with certainty. Let $\mathcal{S} = \langle Z_1, \dots , Z_n \rangle$ denote the group generated by these $Z$ operators.

The entire list of measurements can be restricted to the magic state register in the following way.
For each non-trivial measurement operator $M_i$, let $M_i = P_i Q_i$, where $P_i$ ($Q_i$) only has support on $\mathcal{R}_n$ ($\mathcal{R}_t$). %
We begin with $M_1$.

First, suppose $P_1$ commutes with all previously performed measurements (which are simply dummy measurements).
Then $P_1$ belongs to either $\pm \mathcal{S}$.
If $Q_1 = \mathds{1}$ the measurement outcome is deterministic and can be computed efficiently classically.
If $Q_1$ is non-trivial, then $Q_1$ has the same measurement statistics as the entire operator $M_1$ (up to a potential change of sign).
Hence $M_1$ can simply be replaced with $\pm Q_1$ without altering the measurement statistics or post-measurement state, with the sign determined by $M_1$ belonging to $\mathcal{S}$ or $-\mathcal{S}$.

Second, suppose $P_1$ anti-commutes with some $Z_k\in\mathcal{S}$.
In this case, it is simple to see that outcomes $M_1 = \pm 1$ each occur with probability $1/2$; thus we can simulate its measurement with an unbiased coin with outcome $\lambda_1 \in \lbrace +1, -1\rbrace$.
Instead of performing the measurement of $M_1$, we can enact the Clifford gate $V_{\lambda_1} = \exp (\lambda_1 \frac{\pi}{4} M_1 Z_k) = \frac{1}{\sqrt{2}}(1 + \lambda_1 M_1 Z_k)$.
This maps the initial state of $\mathcal{R}_n\otimes \mathcal{R}_t$ to the post-measurement state associated with $M_1 = \lambda_1$, since $Z_k$ stabilizes the initial state.
After having performed this replacement, we commute $V_{\lambda_1}$ past all remaining measurements in the circuit (thereby updating them to other Pauli measurements) and then delete it. 

We proceed similarly for all measurements. 
For each (updated) $M_i$ we first check if this measurement operator is independent of any previously performed measurements (including the dummy measurements). 
If $M_i$ is equal (up to a sign) to a product of previous measurements, we need not perform the measurement of $M_i$ explicitly. Instead its measurement outcome is deterministic and can be computed efficiently classically.
We then check if it anti-commutes with any previously performed measurements. 
If not, it can be restricted to $\mathcal{R}_t$ for the same reason as above.
If so, it can be replaced by some $V_{\lambda_i}$ with $\lambda_i \in \lbrace +1,-1\rbrace$ chosen uniformly at random, as above.
If it anti-commutes with previous measurement $N$ with outcome $\sigma$, we choose $V_{\lambda_i} = \exp(\lambda_i \sigma \frac{\pi}{4} M_i N)$.
$V_{\lambda_i}$ is commuted past all remaining measurements and deleted.

After proceeding in the same way for all $M_i$, we end up with a set of measurements restricted to the register $\mathcal{R}_t$. 
We can now delete the computational register $\mathcal{R}_n$ which no longer features in the circuit.
$\mathcal{R}_t$ is composed of $t$ qubits. 
The measurements resulting from the above procedure all commute since anti-commuting measurements were replaced by Clifford gates.
Therefore we end up with at most $t$ commuting (adaptive) measurements needing to be performed on $\mathcal{R}_t$.

\subsection{Runtime of classically simulating a PBC}

Naively the runtime of the classical simulation of the PBC will be $\mathcal{O}(t^3 \chi_t^2)$ for each measurement in the final PBC being simulated~\cite{Bravyi_PBC2016}, plus the time it takes to calculate the next measurement in the PBC from previous measurement outcomes and check if it is independent from previously performed measurements [which is a poly$(t)$-time task].
While we are ultimately concerned with the exponential part of the simulation runtime, we first note some simplifications that could be made to the simulation, which will come into play for our numerical simulations.

Suppose there are $k$ measurements in the PBC. 
Let $k = k_g  + k_m + k_o$ where we have $k_g$ of the final measurements resulting from original gadget measurements (GMs), $k_m$ resulting from monitoring measurements and $k_o$ from output measurements (OMs).
Simulating a monitored circuit, we assume we know the outcomes of the monitoring measurements already and merely wish to sample from the output distribution of the OMs.
Furthermore, it can be seen that GMs have equal probabilities $1/2$ for outcomes $\pm 1$.
So these two types of measurements from the circuit, if they are retained in the PBC, have output probabilities that do not need to be calculated. %
We only need to perform non-trivial, possibly exponentially-scaling calculations for $k_o$ of the $k$ measurements. 
\section{Details on stabilizer-purified \texorpdfstring{$\bm{T}$}{} gate \label{app:SPTgate}}
In this Appendix, we prove some of the statements used in Sec.~\ref{sec:SPTgate}. 

\subsection{Non-stabilizer superposition}\label{app:non-stab_super}
Here, we show why the non-stabilizer state $T\ket{\psi}$ can be decomposed as a superposition of (at least) two stabilizer states $\ket{\psi_\pm}$ with $\mathcal{S}_\pm = \langle \pm Z_1, g_2, \dots, g_n \rangle$. 

A useful result of Ref.~\onlinecite{bravyi2005universal} is that a single qubit state $\ket{\phi} = a_0 \ket{0} + a_1 \ket{1}$ with $|a_0|=|a_1|=1/\sqrt{2}$ is a stabilizer state iff the phase difference between $a_0$ and $a_1$ is a multiple of $\pi/2$, that is $\arg (a_1/a_0) = m \frac{\pi}{2}$ with $m=-3, \dots, 3$.

Let us consider the overlaps between initial stabilizer state $\ket{\psi}$ with $\mathcal{S} = \langle s_1, \dots, s_n \rangle = \langle s_1, g_2, \dots, g_n \rangle$ and $\ket{\psi_\pm}$. 
Note that we can use for $\ket{\psi}$ also the generators $g_2, \dots, g_n$, which commute with both $s_1$ and $\pm Z_1$. 
Since these are pure states, we have $|\braket{\psi}{\psi_\pm}|^2 = \mathrm{Tr} (\rho \rho_\pm)$, where $\rho_{(\pm)} = \ketbra{\psi_{(\pm)}}{\psi_{(\pm)}}$. 
Writing the density matrices of pure stabilizer states in terms of their generators, we find 
\begin{align}
      \mathrm{Tr} (\rho \rho_\pm) 
                &= \mathrm{Tr} \left[ \left( \prod_{j=2}^n \frac{1+g_j}{2} \right)^2 \frac{1+s_1}{2} \frac{1 \pm Z_1}{2} \right]  \\
                &= \frac{1}{4} \mathrm{Tr}  \left( \prod_{j=2}^n \frac{1+g_j}{2} \right) = \frac{1}{2} \equiv |a_\pm|^2.
\end{align}%
Thus, we have $\ket{\psi} = a_+ \ket{\psi_+} + a_- \ket{\psi_-}$ with $|a_\pm|=1/\sqrt{2}$. 
But since $\ket{\psi}$ is a stabilizer state, there exists a Clifford unitary $U$ such that 
\begin{align}
U \ket{\psi} = (a_+ \ket{0} + a_- \ket{1}) \otimes \ket{0}^{\otimes n-1},
\end{align}
which is also a stabilizer state. 
The observation from the paragraph above implies the phase difference between $a_+$ and $a_-$ must be a multiple of $\pi /2$. 

Let us consider the action of the $T$ gate on $\ket{\psi_\pm}$ that acts, without loss of generality, on the first qubit.  
The $T$ gate can be decomposed as $T_1 = \alpha \mathds{1} + \beta Z_1$ with $\alpha+\beta = 1$ and $\alpha - \beta = e^{i\pi/4}$. 
Since the generators $g_2, \dots , g_n$ commute with $Z_1$, the action of the $T$ gate depends only on the first generator $\pm Z_1$ yielding $T\ket{\psi_+} = \ket{\psi_+}{}$ and $T\ket{\psi_-} = e^{i\pi/4}\ket{\psi_-}{}$. 
Hence, we find  
\begin{align}
T\ket{\psi} = c_+ \ket{\psi_+} + c_- \ket{\psi_-}
\end{align}
with $c_+ = a_+$ and $c_- = e^{i\pi/4}a_-$, so the phase difference $\arg (c_-/ c_+) = m \pi/2 + \pi/4$ is not a multiple of $\pi/2$. 
Since $\ket{\psi_\pm}$ are stabilizer states with identical generators apart from the sign of the first one, there exists a Clifford unitary $V$ such that 
\begin{align}
    VT\ket{\psi} = (c_+ \ket{0} + c_- \ket{1}) \otimes \ket{0}^{\otimes n-1}.
\end{align}
Assuming $T\ket{\psi}$ is a stabilizer state would imply $VT\ket{\psi}$ is also a stabilizer state. 
However, due to the phase difference between $c_+$ and $c_-$ this cannot be true. 
Thus, we have shown that $T \ket{\psi}$ is a non-stabilizer state, which can be written as a superposition of $\ket{\psi_\pm}$.

\subsection{Monitor form and its retainment \label{app:MonForm}}
Here, we prove the claim that a monitor measurement only stabilizer-purifies a $T$ gate if the corresponding ``logical qubit" is measured. 
We also consider the PBC procedure for this scenario of a single $T$ gate and a monitor $M$ more closely.

\subsubsection{A monitor produces a stabilizer state if and only if it measures the state of the logical qubit}

Here, we show that if a $T$ gate has produced a non-stabilizer state, then a subsequent measurement can only remove the injected magic by measuring this logical qubit. 
That is, it must be a measurement in one of the cosets from Eq.~(\ref{eq:Stab_pur_possibs}). 
It is clear from the result of App.~\ref{app:non-stab_super} that measuring $Z_1$, $s_1$ or $is_1Z_1$ collapses the superposition into a stabilizer state (note $s_1 \ket{\psi_\pm} = \ket{\psi_\mp}$).
Measuring, for example, $Z_1 g$ for $g\in \mathcal{G}$ instead of $Z_1$ does not change anything since $g$ is a stabilizer of both $\ket{\psi_\pm}$.

To show the converse, note that if a measurement $M$ anti-commutes with any $g\in\mathcal{G}$, the post-measurement state is non-stabilizer. 
For example, suppose $\lbrace f, g_2 \rbrace = 0$. 
Then measuring operator $f$ on state $\ket{\psi_\pm}$ and obtaining outcome $\lambda$ results in a state with stabilizer group $\mathcal{S}^\lambda_\pm = \langle \pm Z^\prime_1, \lambda f, h_3, \ldots, h_n\rangle$, where $h_j = g_j$ if $[f,g_j] =0 $ and $h_j = g_j g_2$ otherwise, and similarly $Z^\prime_1 = Z_1$ if $[f,Z_1] = 0$ and $g_2Z_1$ otherwise.
Hence, after measuring $f$ on state $T\ket{\psi}$, we obtain state $c_+ \ket{\psi_+^\lambda} + c_-\ket{\psi_-^\lambda}$ for states $\ket{\psi_\pm^\lambda}$ stabilized by $\mathcal{S}_\pm^\lambda$, respectively. This state is not a stabilizer state.
Therefore, a measurement $M$ that produces a stabilizer state needs to be in the centralizer of $\mathcal{G}$ but it cannot be a member of $\mathcal{G}$ (otherwise the measurement does not change the state); in other words, it is a logical operator with respect to stabilizer group $\mathcal{G}$.

\subsubsection{The monitor measurement is retained in PBC as a single-qubit measurement of the magic state \label{app:Mc}}
We now consider in more detail what happens in the PBC procedure when a single $T$ gate is stabilizer-purified by a monitor $M$, i.e., when the logical qubit introduced by that $T$ gate is measured. 
As before, we replace the $T$ gate with a gadget and introduce an ancillary magic state.
We also introduce dummy measurements of all operators in $\mathcal{S}$ (the stabilizer group of the initial state) that precede all other operations. 
We begin with the case in which the monitor commutes with the gadget measurement $Z_1 Z_a$.
In this case (assuming the monitor measures the logical qubit), it follows from the above that $M = \pm Z_1 g$ for some $g\in\mathcal{G}$. Let us see that this measurement is retained in the PBC circuit as $\pm Z_a$. 

The gadget contains Clifford gates $U = \exp{(-i\frac{\pi}{4}Z_1 X_a)}$ and potentially $S^\dagger_1$. 
Commuting these past the monitor does not change it, since $M = \pm Z_1 g$ commutes with both of these gates.
The GM from the gadget is $Z_1 Z_a$.
We know it anti-commutes with $s_1$; thus it is replaced by some $V = \exp(\lambda \frac{\pi}{4} Z_1 Z_a s_1)$ for $\lambda = \pm 1$ chosen at random. 
Commuting $V$ past $M$ results in measurement operator $\pm s_1 g Z_a$, and restricting this updated operator to the MSR (since it commutes with all the dummy measurements in $\mathcal{S}$) yields a retained monitor $\pm Z_a$.

If the monitor $M$ anti-commutes with the GM $Z_1 Z_a$, we showed that $M=\pm s_1 g$ or $M = \pm i s_1 Z_1 g$. We now show that it is retained in the PBC as either $\pm X_a$ or $\pm Y_a$. 

First, suppose that the gadget measurement outcome is $-1$ so that the $S^\dagger_1$ gate is not included.
Then commuting $U$ past $M$ results in an updated monitor measurement: $\pm i Z_1 s_1 g X_a$ or $\pm s_1 g X_a$, see Eq.~(\ref{eqn:Monitor_mmt_possibs}).
Commuting $V$ (see above) past this measurement results in $\pm g Y_a$ or $\pm s_1 g X_a$, respectively.
These measurements commute with all $\mathcal{S}$ and so may be restricted to the magic state register: they are retained in the PBC as either $\pm X_a$ or $\pm Y_a$.

Second, suppose that gadget measurement outcome is $+1$, so that the gate $S^\dagger_1$ is included, the monitor $M=\pm s_1 g$ or $M = \pm i s_1 Z_1 g$ is updated first to $\pm  is_1 Z_1 g$ or $\pm s_1 g$ respectively, before $U$ and $V$ are commuted past this measurement.
Therefore, the measurement is retained in PBC as either $\pm X_a$ or $\pm Y_a$.

\section{Details of the causal cone of magic \label{app:CCM}}
In this Appendix, we consider the causal set and cone of a $T$ gate in more detail than in Sec.~\ref{sec:CCM}.

\subsection{Definition and SP implications}
We consider one $T$ gate acting on qubit 1 of the $\ket{\psi}$ stabilized by $\mathcal{S} =\langle s_1, \dots , s_n\rangle$. Let $T=T_1$ inject magic due to $\{ Z_1, s_1 \} =0$. The state $T_1 \ket{\psi}$ represents a logical qubit in a magic state encoded by a stabilizer code~\cite{gottesman1997stabilizer}. This code has stabilizer group $\mathcal{G} = \langle g_2, \dots, g_n \rangle$ and logical operators $Z_1$ and $s_1$, where $g_i = s_i$ if $[s_i, Z_1]=0$ or $g_i = s_1 s_i$ if $\{s_i, Z_1\}=0$. 

We are interested in the minimal set of spacetime points outside of which a monitor measurement cannot
SP the time-evolved state for any choice of the Clifford layers after the $T$ gate.
We call this set the \textit{causal set} of a $T$ gate. A measurement $M$ stabilizer-purifies the state $T_1 \ket{\psi}$ if it measures the logical qubit; hence, $M$ must be in one of the cosets (up to a sign) $Z_1 \mathcal{G}, s_1 \mathcal{G}, s_1 Z_1 \mathcal{G}$. cf. App.~\ref{app:MonForm}. 
Thus, the spatial structure of the logical operators from $Z_1 \mathcal{G}, s_1 \mathcal{G}$, and $s_1 Z_1 \mathcal{G}$ determines that of the causal set.

We can sidestep the possibly intricate spatial structure of the causal set, inherited from the logical operators, by defining a causal cone as the convex hull of the causal set.
We call this causal cone the \textit{causal cone of magic} (CCM). 
We first define the apex of the CCM, and then explain how this apex extends causally. 
Note that each member of $s_1\mathcal{G}$ and $s_1 Z_1 \mathcal{G}$ has non-zero support on qubit 1 since $s_1$ anti-commutes with $Z_1$.
We take the most local (i.e., minimal width in periodic boundary conditions) members of $s_1\mathcal{G}$; we then take the union of these most local supports and call that union $A$. (Note that $A$ includes qubit 1.)
Each member of $Z_1 \mathcal{G}$ and $s_1 Z_1 \mathcal{G}$ has non-empty support on $A$ since they anti-commute with each member of $s_1\mathcal{G}$.
Putting all these together, we conclude that any Pauli operator whose support does not intersect $A$ is guaranteed not to be a logical operator.
We define the CCM's apex $\mathcal{I}_A$ as the smallest interval containing $A$. 
Subsequent (non-SP) circuit layers will update the stabilizers and logical operators and this can lead $A$ to expand up to the extent allowed by the Clifford circuit's brickwork structure.
The corresponding causal extension of $\mathcal{I}_A$ defines the CCM. 
The CCM captures the essence of magic spreading while avoiding technical complications arising from the causal set of a $T$ gate.

\subsection{PBC implications}
We further show that a monitor or outcome measurement $P$ can acquire support on the ancillas of two $T$ gates only if it lies in the intersection of their CCMs.

We consider two $T$ gates acting on qubits $1$ and $j$, respectively. We assume that there is no monitor in the layers between them, calling $C$ the combination of Clifford gates between them.
They both add magic only if
\begin{align}
    \{ s_1, G_1 \} = 0 \textrm{ and } \{ g_2, G_2 \}=0,
\end{align}%
where $s_1$ and $g_2$ are distinct stabilizers of $\ket{\psi}$, $G_1=Z_1 Z_{a_1}$ and $G_2 = C^\dagger Z_j Z_{a_2} C$ are the GMs at the time of the $T_1$ gate.
Since both $T$ gates add magic, $Z_1$ and $Z_j$ are not in the stabilizer group at the corresponding $T$ gate's layer, hence neither GM is retained in PBC; both are replaced by a Clifford gate $V_{1,2}$, cf. App.~\ref{app:PBC_Details}.

Going through the PBC procedure, $P$ is updated by commuting past it not only the Cliffords in the circuit but also $U_1, V_1, U_2$ and $V_2$. The measurement acquires support on ancilla $a_i$ only if it anti-commutes with $U_i$, or $V_i$, or both; hence, the measurement must overlap with $U_i$ or $V_i$, and thus have a non-empty support in the CCM of the $T$ gate corresponding to the ancilla $a_i$. 
Therefore, $P$ acquires support on both ancillas only if $P$ is in the intersection of both CCMs.
\section{Bulk monitors SP probability \label{app:bulkMons}}
In this Appendix, we derive the probability of monitors in the bulk of the $T$-circuit-block, i.e., not immediately after the $T$ gate, to stabilizer-purify, in more detail than outlined in Sec.~\ref{sec:SPprob}.

\subsection{Proof of Theorem 2}
Here, we derive Theorem~\ref{thm:gamma}, which we reproduce for convenience.

\setcounter{new_th}{1}
\begin{new_th}
    For a pure stabilizer state, one can choose stabilizer generators such that a single-qubit Pauli operator $M_j$ on qubit $j$ is expressible as
    \begin{align}
        M_j =  \prod_{i=1}^{\gamma_j} g_i^{\alpha_i} \overline{g}_i^{\beta_i},%
        \label{eq:stabDecompApp}
    \end{align}
    up to a $\pm1$ or $\pm i$ prefactor, where $g_i$ are stabilizer generators, $ \overline{g}_i$ are corresponding destabilizers,\footnote{We define a destabilizer $\bar{g}_i$ of generator $g_i$ of stabilizer group $\mathcal{S}$ to be a Pauli operator that anti-commutes with $g_i$ and commutes with all other generators $g_{j\neq i}$ of $\mathcal{S}$~\cite{aaronson2004improved}.} $\alpha_i, \beta_i = 0, 1$, and the number $\gamma_j$ of generators needed satisfies
    \begin{align}
        \gamma_j = 2 S_\mathrm{vN}(j) + \mathcal{O}(1), \label{eq:gammaApp}
    \end{align}
    where $S_\mathrm{vN}(j)$ is the von Neumann entanglement entropy of the subsystem with qubits $1,\dots,j-1$.     
\end{new_th}
\begin{proof}

Using a construction from Ref.~\onlinecite{fattal2004entanglement}, one can always separate the generators of a stabilizer state according to a bipartition with subsystems $B$ and $C$ as (i) local generators in $B$,
(ii) generators straddling the cut, and (iii) local generators in $C$. 
Two other useful results of Ref.~\onlinecite{fattal2004entanglement} are that the minimum number of generators straddling the cut between $B$ and $C$ is twice the entanglement entropy across the cut $2 S_{\mathrm{vN}} (\rho_B) = 2 S_{\mathrm{vN}} (\rho_C) $, and %
\begin{align}
    S_{\mathrm{vN}} (\rho_B) = |B| -  |\mathcal{S}_B|,
\end{align}
where $|B|$ is the size of subsystem $B$ and $|\mathcal{S}_B|$ is the number of generators supported only on $B$.

For the following, define a destabilizer $\bar{g}_i$ of generator $g_i$ of stabilizer group $\mathcal{S}$ to be a Pauli operator that anti-commutes with $g_i$ and commutes with all other generators $g_{j\neq i}$ of $\mathcal{S}$.
Note, for a $(g_i,\overline{g}_i)$ pair, an alternative destabilizer can be defined to be $i g_i\overline{g}_i$. 

We turn to $M_j$ and consider two choices of bipartitions of the qubits. 
First, we put the entanglement cut on one side of $j$ and call subsystem $B$ that which includes qubits $j, \dots, n$.  
Then $M_j$ can overlap only with the $2S_{\mathrm{vN}} (\rho_C)$ generators straddling the cut and with those confined to subsystem $B$ since $j$ is absent from subsystem $B$. 

Second, we put the cut on the other side of $j$ and call subsystem $C'$ the subsystem without $j$, i.e., $C$ is reduced to $C'$ with $|C'| = |C|-1$ qubits.
$\mathcal{S}_{C'}$ differs from $\mathcal{S}_C$ by operators that are either fully supported on qubit $j$ or those that act non-trivially on qubit $j$ and some other qubit(s) in $C'$. 
But we can ensure that there are only at most three such operators, since there are only three non-identity Pauli operators acting on qubit $j$, and any two generators $g_i$ and $g_k$ that act with the same Pauli operator on qubit $j$ can be replaced by $g_ig_k$ which acts as the identity on qubit $j$. 
Therefore $|\mathcal{S}_{C'}| = |\mathcal{S}_{C}| - \mathcal{O}(1)$. 
$M$ does not feature in any element of $\mathcal{S}_{C'}$ so $M_j$ could have featured in only $\mathcal{O}(1)$ of $\mathcal{S}_C$'s generators, i.e., those that are not also generators of $\mathcal{S}_{C'}$. 
This leads to $M_j$ overlapping with at most $\gamma_j$ generators where:
\begin{align}
    \gamma_j = 2 S_{\mathrm{vN}} (\rho_C) + \mathcal{O}(1).
\end{align}%
The Pauli $M_j$ is expressible in terms of these $\gamma_j$ generators and their $\gamma_j$ destabilizers. 
All the other generators (i.e., the $B$ and $C'$ generators) do not have $j$ in their support and thus cannot feature in $M_j$ by themselves.  
Hence, in the expression of $M_j$ in terms of stabilizer generators and destabilizers, the $B$ and $C'$ generators can enter at most as tails tied to the destabilizers featuring in $M_j$. 
This is in order to cancel the destabilizer combination's support in subsystems $B$ and $C'$. 
(Note $B$ and $C'$ destabilizers cannot enter since $M_j$ cannot flip $B$ and $C'$ generators due to not being in their support.) 
However, if these tails including $B$ and $C'$ generators are needed, we can redefine the $BC$ generators such that the tails are removed.
Thus, relabelling the generators and destabilizers entering in $M_j$ for brevity (since these were not necessarily the first $\gamma_j$ generators of the state to be stabilizer-purified) yields
\begin{align}
    M_j =  \prod_{i=1}^{\gamma_j} g_i^{\alpha_i} \overline{g}_i^{\beta_i},%
\end{align}%
up to a $\pm1, \pm i$ prefactor, and $\alpha_i, \beta_i = 0, 1$.
\end{proof}

\subsection{SP probability}
First, we compute the SP probability of a single monitor. 
Consider the two states $\ket{\psi_\pm}$, which differ only in the stabilizer $\pm \tilde{g}_1$.
The favorable cases which lead to SP are 
\begin{align}
    \mathrm{(i)} \quad Z_j &= g_1 \prod_{i=2}^{\gamma_j} g_i^{\alpha_i}, \\ 
    \mathrm{(ii)} \quad Z_j &=  \overline{g}_1 \prod_{i=1}^{\gamma_j} g_i^{\alpha_i},%
\end{align}%
assuming $\tilde{g}_1 \in \{ g_1, \dots, g_{\gamma_j} \}$ and denoting w.l.o.g. $g_1 = \tilde{g}_1$.
In case (i), one of the two states $\ket{\psi_\pm}$ is incompatible with the measurement, while in case (ii), both post-measurement states are the same.
Hence, counting how many combinations of $\bm \alpha$ and $\bm \beta$ lead to SP due to the monitor $Z_j$, excluding the identity, yields
\begin{align}
    \mathbb{P}(Z_j \textrm{  SP}|\,\textrm{prev. NSP}) = \frac{3}{2}\frac{\ 2^{\gamma_j}}{2^{2\gamma_j} -1} \label{eq:ZjSP}
\end{align}%
conditioned on previous monitors not stabilizer-purifying.

Second, we compute the SP probability from all monitors in a single layer, say the $k^\textrm{th}$. 
Different monitors may have different $\gamma_j$s.
Here, we focus on a regime with a volume- or area-law scaling of the EE $S_\mathrm{vN}(j)$.
Thus, each $\gamma_j$ has the same scaling with $n$, that is $\gamma_j = c_j n +\mathcal{O}(1)$ or $\gamma_j = c_j$, with constants $c_j$, for volume- or area-law scaling, respectively.
Since we shall be interested in the scaling of the SP time with $n$, we may simplify the calculations by setting the same  $c_j$ and $\mathcal{O}(1)$ correction for each $\gamma_j$; thus $\gamma_j= \gamma$  for any $j = j_1, \dots, j_w$ where $w$ approximates the total number of qubits in the layer where monitors can potentially SP (see main text). 
The SP probability for the first monitor in the layer conditioned on the previous layers of monitors not stabilizer-purifying is given by Eq.~(\ref{eq:ZjSP}); using $\gamma_j = \gamma$, we denote it as $f \equiv \frac{3}{2}\frac{2^\gamma}{4^\gamma -1}$. 
Similarly, the probability for the second monitor in the layer to stabilizer-purify and the first one not to, conditioned on previous layers not stabilizer-purifying is $(1-f)f$. 
Continuing for all the $pw$ potentially purifying monitors in the $k^\textrm{th}$ layer, the SP probability for this layer, conditioned on previous layers not stabilizer-purifying, is 
\begin{align}
     \mathcal{P} &\equiv \mathbb{P}(d^* = k |\,k'<k \, \textrm{NSP})\\
     &\approx \sum_{i=1}^{pw} f (1-f)^{i-1} = 1 - (1-f)^{pw}. 
\end{align}

Third, we compute the bulk monitors SP probability by considering the SP probability for each layer. 
Similarly to monitors in the same layer, we can treat different layers of monitors as independent apart from the SP conditional. 
Using $\mathbb{P}(d^*=1) = p_1$ from Eq.~(\ref{eq:p1def}), we find the probability for the $k^\textrm{th}$ layer to SP, with $k = 2, \dots, d$%
\begin{align}
    \mathbb{P} (d^* = k) = (1-p_1) (1-\mathcal{P})^{k-2} \mathcal{P}.
\end{align}%
Thus, we find  (for $d\gg1$ in the last step)
\begin{align}
    \mathbb{P}(d\geq d^* >1) &= \sum_{k=2}^d \mathbb{P}(d^* = k) \\
    &= (1-p_1) \left[1-(1-\mathcal{P})^{d-1} \right] \\
    &\approx (1-p_1) \left[ 1 - (1-f)^{pwd}\right].
\end{align}
\section{Spacetime partitioning \label{app:SpacetimePart}}
In this Appendix, we show how the simulation task for each quantum circuit instance can be reduced to simulating a set of smaller circuits by using the structure of monitoring measurements and 2-qubit Clifford gates. 
We denote this procedure \textit{spacetime partitioning}, and map it to an inhomogeneous bond percolation model. 
Using the mapping to percolation, we find a critical monitoring rate $p_\text{c}^\text{TN}$, which marks a phase transition in the simulability by an exact TN contraction. 
A critical monitoring rate $p_\text{c}^\text{PBC}$ of a $\mathrm{\textsf{CPX}_{\mathrm{PBC}}}$ transition is upper bounded by this $p_\text{c}^\text{TN}$, that is $p_\text{c}^\text{PBC} < p_\text{c}^\text{TN}$. %
This upper bound can be interpreted as the analog of the Hartley entropy transition which upper bounds the von Neumann entropy transition~\cite{skinner2019mipt}.
\begin{figure}[t]
    \centering
        \includegraphics[width=8.6cm]{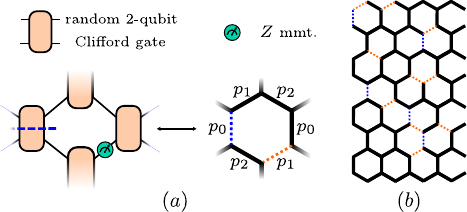}
    \caption{($a$): Equivalent circuit and honeycomb lattice unit cells. Separable gates and monitors, respectively, break vertical and oblique bonds; this leads to bond percolation on the honeycomb lattice with bond probabilities $p_0 = 1-\sigma$ and $p_1 = p_2 = 1-p$. ($b$): Bond percolation lattice corresponding to a circuit as in Fig.~\ref{fig:circuit}$(a)$. The percolation picture is blind to $T$ gates.}
    \label{fig:circ_to_perc}
\end{figure}

\subsection{Spacetime partitioning of the circuit\label{app:stp}}
 Monitors partition the circuit temporally by projecting the state of a qubit and making the previous state partially irrelevant---e.g., in the extreme case of monitoring all the qubits simultaneously at an intermediate time $t_0$, the final state can be reconstructed solely from the monitoring outcomes and the circuit after $t_0$.

Separable Clifford gates partition the circuit spatially.   $C\in\mathcal{C}_2$ is separable if
\begin{equation}
C = u_1 \otimes u_2,    
\end{equation}
where $u_1, u_2 \in \mathcal{C}_1$. 
The classification of the Clifford group reveals~\cite{RBM2012,bravyi2021maslov} that only 576 of the $|\mathcal{C}_2| = 11520$ gates are separable, yielding a separability probability $\sigma = 0.05$. Other $C\in\mathcal{C}_2$ can be decomposed as 
\begin{equation}\label{eq:C2decomp}
    C = \left( u_1 \otimes u_2 \right) \ U \ \left( v_1 \otimes v_2 \right) , 
\end{equation} where $U \in \{ \mathrm{\textsf{SWAP}}, \mathrm{\textsf{CX}}_{1,2}, \mathrm{\textsf{CX}}_{1,2} \mathrm{\textsf{CX}}_{2,1} \}$ and $u_i, v_i$ are in subsets %
of $\mathcal{C}_1$~\cite{RBM2012,bravyi2021maslov}.

Non-separable gates in $\mathcal{C}_2$ coincide with possibly entangling---depending on the input state---gates. Consider a state $\ket{\psi}$ with Schmidt rank $m$ for a bipartition $A$ of the system $\ket{\psi} = \sum_{i=1}^{m}c_i \ket{i_A}\ket{i_B}$. The entanglement (or von Neumann) entropy is bounded by the zeroth R\'enyi (i.e., Hartley) entropy $S_A = S(\mathrm{Tr}_B \ketbra{\psi}{\psi}) \leq S_A^{(0)} = \log m$. Consider the operator-Schmidt decomposition~\cite{nielsen2003quantum} of a 2-qubit gate $U$ acting at the boundary of the bipartition
    $U = \sum_{i=1}^{r} Q_i \otimes R_i$
where $Q_i$, $R_i$ are single-qubit operators and the Schmidt number $r \leq 4$ (by the Hilbert-space dimension of single-qubit operators). 
$U$ can increase the Schmidt rank at most $r$ times: for $\ket{\phi} = U \ket{\psi}$ we have
\begin{equation}
    \ket{\phi} = \sum_{i=1}^m \sum_{j=1}^r c_i (Q_j \ket{i_A})(R_j \ket{i_B}) = \sum_{k=1}^{mr} \mu_k \ket{\Tilde{k}_A} \ket{\Tilde{k}_B}.
\end{equation}
Eq.~\eqref{eq:C2decomp} features $\mathrm{\textsf{CX}_{1,2}} = \ketbra{0}{0}\otimes \mathds{1} + \ketbra{1}{1}\otimes X$ and $\mathrm{\textsf{SWAP}} = \frac{1}{2}\left( \mathds{1} \otimes \mathds{1} + X \otimes X + Y \otimes Y + Z \otimes Z \right)$. 
The Schmidt numbers are 2 for $\mathrm{\textsf{CX}_{1,2}}$ and 4 for $\mathrm{\textsf{SWAP}}$ and $\mathrm{\textsf{CX}}_{1,2} \mathrm{\textsf{CX}}_{2,1}$. 
While $\mathrm{\textsf{CX}}$ is more commonly considered as a possibly entangling gate, let us illustrate that $\mathrm{\textsf{SWAP}}$ can also increase entanglement across a given bipartition. 
Consider two Bell pairs on a bipartite system $\ket{\psi} = \frac{1}{2}\left( \ket{00} + \ket{11} \right)_A \otimes \left( \ket{00} + \ket{11} \right)_B$ which has $S(\rho_A) = \log1$ and a maximum EE across any bipartition $S_{\mathrm{max}} = \log2$. Applying $\mathrm{\textsf{SWAP}}$ on qubits $A_2$ and $B_1$ yields $\ket{\phi} = \mathrm{\textsf{SWAP}} \ket{\psi}$ with
     $\ket{\phi} = \frac{1}{2}\left( \ket{0_{A_1}0_{B_1}} + \ket{1_{A_1}1_{B_1}} \right) \otimes \left( \ket{0_{A_2}0_{B_2}} + \ket{1_{A_2}1_{B_2}} \right)$,
which has the density matrix $\mathrm{Tr}_B \ketbra{\phi}{\phi} = \frac{1}{4} \mathds{1}_4$ with maximum EE across any bipartition $S_{\mathrm{max}}=\log 4$.

Henceforth, we shall regard 2-qubit Clifford gates as either possibly entangling and non-separable ($r=2,4$) or non-entangling and separable ($r=1$).
(Two-qubit gates cannot have $r=3$~\cite{dur2002OpSchmidt}.)
\subsection{Mapping to inhomogeneous bond percolation \label{app:mapping}}
In order to keep track of the spacetime regions of a circuit that fully determine the output state, we note the circuit architecture corresponds to a honeycomb lattice, as depicted in Fig.~\ref{fig:circ_to_perc}($a$): gates and qubit lines correspond to vertical and oblique bonds, respectively. 
The spatial and temporal independence of two regions is modelled by cutting the bonds connecting the respective regions. 
The spacetime partitioning mechanisms described in App.~\ref{app:stp} can induce such independence: monitors cut oblique bonds, while $r=1$ gates cut vertical bonds.   
The probabilistic nature of bond cutting suggests a link to inhomogeneous bond percolation. 
It is inhomogeneous since vertical and oblique bonds are broken with different probabilities, $\sigma \equiv 1 - p_0$ and $p \equiv 1 - p_1 = 1-p_2$, respectively, where $\{p_i\}_{i=0}^2$ are the bond occupancy probabilities.

It is known that the Hartley entropy $S^{(0)}$ of a subsystem can also be mapped to a bond percolation problem; for a brickwork circuit as in Fig.~\ref{fig:circuit}$(a)$, but with generic 2-qubit unitaries instead of Clifford gates, percolation is on the square lattice~\cite{skinner2019mipt}.
The monitored Clifford$+T$ circuits we consider lead to a honeycomb percolation problem instead because, unlike for circuits with Haar-random 2-qubit gates, the $r=1$ gates arise with nonzero probability.\footnote{The square lattice of Ref.~\onlinecite{skinner2019mipt} is recovered for $p_0=1$, i.e., for vanishing probability of $r=1$ gates; then no  vertical bonds are ever cut and contracting these bonds to single points reduces the honeycomb to the square lattice while retaining its connectivity.}
By the relation between entangling properties and $r$ for $\mathcal{C}_2$ gates, computing $S^{(0)}$ for our circuits also maps to the same honeycomb percolation problem, but as we next discuss, honeycomb percolation also leads to $\mathrm{\textsf{CPX}_{\mathrm{TN}}}$. We define $\mathrm{\textsf{CPX}_{\mathrm{TN}}}$ as a runtime proxy for simulating a quantum circuit by an exact tensor network contraction.

\subsection{\texorpdfstring{$\mathrm{\textsf{CPX}}_{\mathbf{\mathrm{\bf{TN}}}}$}{} from percolation}
\label{app:TNtransition}

\subsubsection{Cluster TN contraction runtime \texorpdfstring{$\mathrm{\textsf{CPX}}^{(\mathrm{CC})}_{\mathrm{TN}}$}{}} 

To study $\mathrm{\textsf{CPX}_{\mathrm{TN}}}$, we first consider the clusters connected to the final time boundary of the percolated lattice; we dub these clusters \textit{circuit clusters} (CCs). 
We focus on these clusters as only these enter the simulation of final measurements. 
Our rough runtime estimate for contracting the TN corresponding to a CC with maximal
width $s$ and depth $d$ is 
\begin{equation}
    \mathrm{\textsf{CPX}}^{(\mathrm{CC})}_{\mathrm{TN}} \equiv 2^{\min(s,d)}. 
    \label{eq:TNcpx}
\end{equation}
The $\min(s,d)$ dependence is because the TN for a CC can be contracted either in the temporal or spatial directions, with the runtime scaling exponentially in the number of legs of the TN at each stage of the contraction~\cite{markov2008sim}. This number of legs will be roughly either $s$ or $d$, depending on the direction of contraction. %
The idea of exploiting the shallowest dimension of a quantum circuit in $(2+1)$D was also used in Ref.~\onlinecite{napp2022shallow} to assess the simulability of shallow circuits. $\mathrm{\textsf{CPX}}^{(\mathrm{CC})}_{\mathrm{TN}}$ neglects any $\mathrm{poly}(s,d)$ prefactors and $\mathcal{O}(1)$ prefactors in the exponent; it merely aims for an estimate of whether $\mathrm{\textsf{CPX}_{\mathrm{TN}}}$ may scale exponentially with the system size. In particular, CCs with $\mathrm{min}(s,d) = \mathcal{O} (\log n)$ have $\mathrm{\textsf{CPX}}^{(\mathrm{CC})}_{\mathrm{TN}}= \mathrm{poly}(n)$, so they are efficiently simulable. 
However, $\mathrm{\textsf{CPX}}^{(\mathrm{CC})}_{\mathrm{TN}}$ gives only a sufficiency estimate: e.g., for $q=0$ (i.e., a Clifford circuit) CCs of any $s$ and $d$ are efficiently simulable, yet $\mathrm{\textsf{CPX}}^{(\mathrm{CC})}_{\mathrm{TN}}$ may suggest otherwise.

\subsubsection{Spacetime percolation and \texorpdfstring{$\mathrm{\textsf{CPX}}_{\mathrm{TN}}$}{} 
\label{sec:spacetime_perc} }
In our random quantum circuit problem, $\mathrm{\textsf{CPX}}_{\mathrm{TN}}$ is a measure for typical quantum circuits, hence, it depends on the typical CCs, including their width and depth. 
On top of the standard bond percolation model, we need two additional features: ($i$) a wall-like boundary for the final time, making the lattice semi-infinite in the thermodynamic limit, ($ii$) properties of clusters connected to this boundary, i.e., of CCs. 
The spacetime percolation (STP) model features both of these~\cite{grimmett2018stp}. 

We next use some results from percolation theory to characterize the critical point based on the clusters' properties. The 2D critical surface in the 3D parameter space $\mathbf{p} =(p_0, p_1, p_2) \in [0,1]^3$ of STP is the same as that of standard bond percolation on the same lattice~\cite{grimmett2018stp}. For the honeycomb lattice, this is where the combination
\begin{equation}
    \kappa_{\tinyvarhexagon}(\mathbf{p}) = p_0 + p_1 + p_2 + (1-p_0)(1-p_1)(1-p_2)-2 
\end{equation} is vanishing $\kappa_{\tinyvarhexagon}(\mathbf{p}_{\mathrm{c}}) = 0$~\cite{grimmett2013inhom}.
Since $p_0 = 1 - \sigma = 0.95$ is fixed in our model and $p_1 = p_2 = 1-p$, we find a critical monitoring rate $p_c^\text{TN} \simeq 0.48$. The percolating phase $\kappa_{\tinyvarhexagon}(\mathbf{p}) <0$ and the $\kappa_{\tinyvarhexagon}(\mathbf{p})>0$ phase correspond to the hard and easy to simulate phases (using TN contraction), respectively; henceforth, we refer to these as \textit{easy} ($p>p_{\mathrm{c}}^\text{TN}$) and \textit{hard} ($p<p_{\mathrm{c}}^\text{TN}$).

\emph{Hard phase. }For $p<p_c^\text{TN}$, there exists with overwhelming probability an infinite cluster percolating through 
the lattice~\cite{grimmett1999perc}, as depicted in grey in Fig.~\ref{fig:clusters}($a$), which is the only CC with a significant $\mathrm{\textsf{CPX}}^{(\mathrm{CC})}_{\mathrm{TN}}$. 
It has $\min (s,d) =\mathcal{O}(n)$, thus, implying the hardness (i.e., the exponential scaling with $n$ of $\mathrm{\textsf{CPX}}^{(\mathrm{CC})}_{\mathrm{TN}}$) of simulation by TN contraction. 
CCs with $\min (s,d) = \mathcal{O}(1)$ may also occur but their $\mathrm{\textsf{CPX}}^{(\mathrm{CC})}_{\mathrm{TN}}$ is negligible compared to that of the infinite cluster. 

\emph{Easy phase ($p>p_c^\text{TN}$).} We define the radius of a  CC with width $s$ and depth $d$ as
\begin{equation}
    \mathrm{rad(CC)} \equiv s + d.
\end{equation}
In this phase, also called the subcritical percolation phase, there exists $\lambda > 0$ such that the probability 
for a CC to have a radius larger than $k$ satisfies~\cite{grimmet1991expdecay}
\begin{equation}
    \mathbb{P}_{\mathbf{p}} \left[ \mathrm{rad(CC) }\geq k \right] \leq e^{-\lambda k}, \quad \forall \ k > 0, \label{eq:expDecay}
\end{equation} 
for bond probabilities $\mathbf{p}$. 
Strictly speaking, this  was derived for  hypercubic lattices in $D \geq 2$ dimensions~\cite{grimmet1991expdecay}. However, by the universality of bond percolation in 2D~\cite{grimmett2013universality},  
we expect this result to extend to the honeycomb lattice. 

We first consider the average runtime, 
\begin{align}
    \mathrm{\textsf{CPX}}^{(\mathrm{avg})}_{\mathrm{TN}} &\equiv \mathbb{E}_{\mathbf{p}} \left[ \sum_{\mathrm{CC}} \mathrm{\textsf{CPX}}^{(\mathrm{CC})}_{\mathrm{TN}}\right] \label{eq:avgCPXdef}\\
    &= \sum_{r=1}^{\infty} \mathbb{P}_{\mathbf{p}} \left[ \mathrm{rad}(CC) = r \right] \mathrm{\textsf{CPX}}^{(\mathrm{CC})}_{\mathrm{TN}}, \label{eq:avgCPXdivergent}
\end{align}where in Eq.~(\ref{eq:avgCPXdef}) the sum is over the CCs of a certain realization, while in Eq.~(\ref{eq:avgCPXdivergent}) the sum is over the radius values of any CC. 
Despite the exponential suppression of large CCs, without knowing $\lambda$ (as a function of $p$ and $k$), $\mathrm{\textsf{CPX}}^{(\mathrm{avg})}_{\mathrm{TN}}$ cannot be argued to be poly($n$) due to $\mathrm{\textsf{CPX}}^{(\mathrm{CC})}_{\mathrm{TN}}$ itself exponentially increasing in $r$. 

We can, however, consider the \textit{typical} $\mathrm{\textsf{CPX}}_{\mathrm{TN}}$ instead, defined as
\begin{align}
    \log \mathrm{\textsf{CPX}}^{(\mathrm{typ})}_{\mathrm{TN}} &\equiv \mathbb{E}_{\mathbf{p}} \left[ \sum_{\mathrm{CC}} \log \mathrm{\textsf{CPX}}^{(\mathrm{CC})}_{\mathrm{TN}}\right] \label{eq:typCPXdef} \\
    &= \sum_{r=1}^{\infty} \mathbb{P}_{\mathbf{p}} \left[ \mathrm{rad}(CC) = r \right] \log \mathrm{\textsf{CPX}}^{(\mathrm{CC})}_{\mathrm{TN}}.
\end{align}This can be shown to be finite by successive bounds
\begin{align}
    \log \mathrm{\textsf{CPX}}^{(\mathrm{typ})}_{\mathrm{TN}} &= \sum_{r=1}^{\infty} \mathbb{P}_{\mathbf{p}} \left[ \mathrm{rad}(CC) = r \right] \log\left( 2^{\mathrm{min} (s,d)}\right) \nonumber \\
    &\leq \sum_{r=1}^{\infty} \mathbb{P}_{\mathbf{p}} \left[ \mathrm{rad}(CC) = r \right] \frac{r}{2} \label{eq:sdr} \\
    &\leq \sum_{r=1}^{\infty} \mathbb{P}_{\mathbf{p}} \left[ \mathrm{rad}(CC) \geq r \right] \frac{r}{2} \label{eq:cumul}\\
    &\leq \frac{1}{2} \sum_{r=1}^{\infty} e^{-\lambda r} r = \mathcal{O}(1), \label{eq:lastB}
\end{align}where we applied Eq.~(\ref{eq:expDecay}) in the last line. 
Hence, $\mathrm{\textsf{CPX}}^{(\mathrm{typ})}_{\mathrm{TN}} = \mathcal{O}(1)$: these circuits are easy to simulate.\footnote{A similar bounding scheme yields $\mathrm{\textsf{CPX}}^{(\mathrm{avg})}_{\mathrm{TN}} \leq \sum_{s=1}^{\infty} e^{(2-\lambda) s}$, which is why more knowledge about $\lambda$, and a more accurate runtime proxy, is needed for a conclusion about runtime.}

We focused on CCs due to our focus on final measurements.
Including simulating monitoring measurements would require considering also bulk clusters; the conclusions would be similar: the hard phase we found cannot become easy by having to simulate more measurements and since $\kappa_{\tinyvarhexagon}(\mathbf{p}_{\mathrm{c}}) = 0$ is set by bulk percolation, and since there are at most poly($n$) clusters [taking $D=\text{poly}(n)$], the easy phase would remain efficiently simulable. 

\subsection{\texorpdfstring{$\bm{S^{(0)}}$}{} and \texorpdfstring{$\bm{\mathrm{\textsf{CPX}}}$}{} clusters \label{app:clusters}}

\begin{figure}[t]
    \centering
        \includegraphics[width=8.6cm]{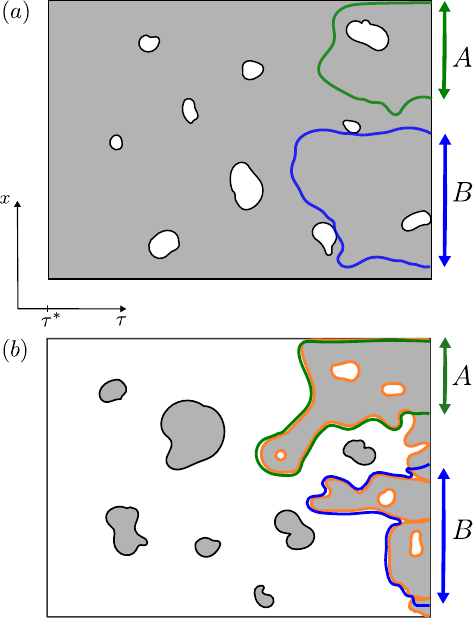}
    \caption{Comparison of $\mathrm{\textsf{CPX}}_{\mathrm{TN}}$ and $S^{(0)}$ clusters. Both ($a$) and ($b$) represent the late-time $\tau > \tau^*$ part of the quantum circuit;  
    the right boundary represents the final state. 
    Grey regions are percolation theory clusters; of these those touching the final state are the $\mathrm{\textsf{CPX}}_{\mathrm{TN}}$ clusters.
    $S^{(0)}$ clusters require choosing subsystems $A, B$ in the final state and are bounded by the corresponding minimal cuts, in green and blue, respectively. 
    ($a$): \textit{hard phase}, $p < p_c^\text{TN}$. The $\mathrm{\textsf{CPX}}_{\mathrm{TN}}$ cluster is the infinite percolating cluster, while each $S^{(0)}$ cluster is finite with minimal cut closed in the bulk. ($b$): \textit{easy phase}, $p>p_c^\text{TN}$. The $\mathrm{\textsf{CPX}}_{\mathrm{TN}}$ clusters (with boundary highlighted in orange) 
    again differ from the $S^{(0)}$ clusters by not being restricted to  subsystems $A,B$.}
    \label{fig:clusters}
\end{figure}

\subsubsection{\texorpdfstring{$S^{(0)}$}{} clusters}
As discussed in App.~\ref{app:mapping}, assuming all 2-qubit gates are not separable implies the bond percolation model is on a
(rotated) square lattice. Paths are defined on the dual lattice. Each step of a path comes with a cost of 0 (or 1) if the crossed bond is empty (or filled). Monitored qubit lines in the quantum circuit, regarded as time-like bonds, correspond to empty bonds. 

The Hartley entropy of a subsystem $A$, denoted by $S_A^{(0)} =  \log \text{rank} (\rho_A )$, corresponds 
to the ``minimal cut'', which consists of the minimal-cost path(s) starting at the boundaries of $A$ at the final time~\cite{nahum2017quantum}. The above cost of 0 or 1 comes from the logarithm of the bond dimension of the qubit line: $0 = \log 1$ if the qubit is monitored or $1 = \log 2$ otherwise. Note the bound $S_A^{(0)} \leq \log \left( \dim \mathcal{H}_A \right) = |A|,$ where $|A|$ is the subsystem size.

When no monitors are present, starting from a pure state, at small enough time $\tau$,
the minimal cut consists of two boundaries spanning the entire time interval $[0,\tau]$. However, at late times there is a discontinuous transition to a single curve closing in the bulk~\cite{nahum2017quantum}. This transition can be understood from the above bound on $S_A^{(0)}$. The minimal cut structure remains the same in the presence of monitors, with a preference for crossing monitored bonds.

Considering the Clifford gates' structure, the bond percolation model is on a honeycomb lattice. Paths are now defined on the triangular dual lattice.
Each step of a path comes with cost 0 (or 1) for a monitored (or not) time-like crossed bond, or with cost $\log r = $ 0, 1, or 2 for a space-like crossed bond replacing a gate with Schmidt number $r$. The minimal cut is defined similarly with the additional feature of moving across space-like bonds. 

The dynamics of interest is in the late-time limit $\tau > \tau^{*} \sim n$~\cite{nahum2017quantum}, where the saturation time $\tau^{*} \sim n$ scaling comes from having a pure input state\footnote{For a maximally mixed input state, the saturation time is $\tau^{*} \sim n^{2/3}$~\cite{yoshida2022ultrafast}.}; hence,
the relevant clusters for $S^{(0)}$ are those bounded minimal cuts closed in the bulk.
Note that there can exist more than one minimal cut for a given subsystem, i.e., multiple cuts with the same minimal cost, which might be nonzero.
\subsubsection{\texorpdfstring{Comparing $\mathrm{\textsf{CPX}}_{\mathrm{TN}}$}{} and \texorpdfstring{$S^{(0)}$}{} clusters}

Circuit clusters selected for an exact TN simulation of the output state have boundaries consisting of strictly 
zero-cost paths\footnote{For open boundary conditions, crossing the system's temporal edge has zero cost [see the top CC in Fig.~\ref{fig:clusters}($b$)].}. These clusters are spacetime regions causally connected to the final time state. 

In contrast to the $S^{(0)}$ clusters, where each choice of an arbitrary subsystem $A$ determines a minimal cut, which defines the $S_A^{(0)}$ cluster, the $\mathrm{\textsf{CPX}}_{\mathrm{TN}}$ clusters are set solely by the circuit and split the system into a set of causally disconnected subsystems. 

Let us consider the distinction between $\mathrm{\textsf{CPX}}_{\mathrm{TN}}$ and $S^{(0)}$ clusters in the late-time behaviour of interest, as depicted in Fig.~\ref{fig:clusters}.
In the hard phase ($p < p_c^\text{TN}$), from percolation theory, we know there exists a unique infinite cluster spanning the entire circuit, which corresponds to the dominant $\mathrm{\textsf{CPX}}_{\mathrm{TN}}$ cluster required for the exact TN simulation of the final state. The minimal cuts for $S^{(0)}$ close in the bulk since we are in the late-time limit. 
Thus, there is a qualitative infinite versus finite distinction between $\mathrm{\textsf{CPX}}_{\mathrm{TN}}$ clusters and $S^{(0)}$ clusters in the hard phase. 
In the easy phase ($p > p_c^\text{TN}$), again from percolation theory, several $\mathrm{\textsf{CPX}}_{\mathrm{TN}}$ clusters are close to the final time boundary. Minimal cuts for $S^{(0)}$ tend to follow the boundaries of $\mathrm{\textsf{CPX}}_{\mathrm{TN}}$ clusters due to their zero cost; however, they are forced to close at the ends of the subsystem for $S^{(0)}$. This constraint prohibits $\mathrm{\textsf{CPX}}_{\mathrm{TN}}$ clusters and $S^{(0)}$ clusters from being identical in the general case despite their significant overlaps in the easy phase.

Although the clusters for $S^{(0)}$ and for $\mathrm{\textsf{CPX}}_{\mathrm{TN}}$ differ, the common underlying percolation model yields the same critical monitoring probability $p_c \simeq  0.48$ for both quantities. The specific effective percolation model of $S^{(0)}$ is the directed polymer in a random environment (DPRE)~\cite{nahum2017quantum}. In contrast, for $\mathrm{\textsf{CPX}}_{\mathrm{TN}}$, the specific model is an extension of STP which considers only clusters with radius scaling faster than $\log(\min (D,n))$, cf. App.~\ref{sec:spacetime_perc}.

\section{Runtime proxy numerical algorithm \label{app:PBC_Numerics}}
\begin{figure}
\begin{algorithm}[H]
  \raggedright{
  Algorithm 1. Computation of MSR blocks and runtime proxy.}
  \algrule
  \algrule
  \textbf{Input}: Circuit size $n$, circuit depth $D$, measurement prob. $p$, $T$ gate prob. $q$, 2-qubit Clifford gates $\mathcal{C}_2$ \\
  \textbf{Output}: MSR block sizes, $\mathrm{\textsf{CPX}}_{\mathrm{PBC}}$%
  \label{EPSA}
   \begin{algorithmic}[1]
   \State Span the circuit realization \textsf{Circuit} and the percolation lattice realization \textsf{PercLattice}
   \State Select \textsf{CC}s, i.e, clusters connected to the final time boundary with minimum dimension $\min (s,d) > \log n$
   \State \textbf{for} each \textsf{CC}
   \Statex \ \ \  \textbf{procedure} Stitching: 
   \State \ \ \ \ \ \ Reconnect the \textsf{CC} as a circuit 
   \Statex \ \ \  \textbf{procedure} PBC:
   \State \ \ \ \ \ \ Replace $T$ gates by $T$ gadgets
   \State \ \ \ \ \ \  Commute Clifford gates to the end of the circuit, 
   \Statex \ \ \ \ \ \  then delete them
   \Statex $\implies$ \textsf{MmtsList} of dummy $Z$s, updated gadget,
   \Statex \ \ \ \ \ \ monitoring, and output measurements
   \State \ \ \ \ \ \  \textbf{for} $M$ in \textsf{MmtsList}
   \State \ \ \ \ \ \ \ \ \ \textbf{if} $\{ M, P \} =0$ for some $P \in $ \textsf{FinalList}:
   \State \ \ \ \ \ \ \ \ \ \ \ \ Replace $M$ with $V(P,M)$
   and commute the 
   \Statex \ \ \ \ \ \ \ \ \ \ \ \ Clifford gate $V$ to the end of the circuit
   \State \ \ \ \ \ \ \ \ \ \textbf{else} 
   \State \ \ \ \ \ \ \ \ \ \ \ \ \textbf{if} $M$ is independent: Append $M$ to \textsf{FinalList}
   \State \ \ \ \ \ \ \ \ \ \ \ \ \textbf{else} Delete $M$
   \Statex $\implies$ \textsf{FinalList}
    \Statex \textbf{procedure} Partitioning: 
    \State \ \ \ By inspecting the supports of the mmts. in \textsf{FinalList},
    \Statex \ \ \ split them into spatially disjoint sets of qubits $\{ \Tilde{R}_a \}$ 
    \State \ \ \ ``Quotient out'' small support mmts. in each $\Tilde{R}_a$  
    \Statex $ \implies$ optimal disjoint sets of qubits $\{ R_i \}$
    \State Compute MSR block sizes, $\mathrm{\textsf{CPX}}_{\mathrm{PBC}}$ 
    \State \textbf{return}  MSR block sizes, $\mathrm{\textsf{CPX}}_{\mathrm{PBC}}$%
   \end{algorithmic}
   \algrule
   \algrule
\end{algorithm}
\end{figure}
In this Appendix, we discuss the algorithm used for computing the MSR block sizes and runtime proxies for the classical simulation of a monitored circuit using PBC.%

We begin by producing a realization of the random circuit 
with a fixed set of Clifford gates from $\mathcal{C}_2$ and locations of monitors and $T$ gates. We also construct the equivalent percolation lattice instance based on the separability of each gate and the monitors, cf. Fig.~\ref{fig:circ_to_perc}($a$). 
Using this percolation lattice, we select the CCs by starting at each vertex on the final time boundary and then we add the neighboring vertices connected to it. 
We retain clusters with both size and depth larger than $\log n$ since they are the only ones that are (potentially) hard to simulate. 
(For the finite sizes tested, this did not produce significant changes in $\textsf{CPX}_\text{PBC}$.) 

The CCs can have boundaries that intersect qubit lines at multiple locations. %
That is, a qubit can be measured at two points within a cluster, with the gates occurring between those measurements not included in the cluster. 
This makes the circuit clusters inadequate for the PBC procedure. We reconnect each circuit by a procedure dubbed stitching. The qubits intersected by a boundary may be monitored with outcome $\lambda_1 = \pm 1$, left idle, then reintroduced in the circuit by a monitor with outcome $\lambda_2$.
If $\lambda_2 = \lambda_1$, then the qubit line between the monitors is directly connected; however, if $\lambda_2 = - \lambda_1$, then a Clifford $X$ gate is inserted between the monitors to flip the qubit, since both monitors are $Z$ measurements.

The stitched circuits corresponding to CCs are amenable to the PBC procedure. 
As discussed in Sec.~\ref{sec:CpxProxy} and App.~\ref{app:PBC_Details}, this is started by commuting all Clifford gates to the end of the circuit, which leads to a list of updated measurements \textsf{MmtsList}. 
Next, we describe how these updates can be implemented for random Cliffords. 

A measurement is a Pauli operator on $n$ qubits
\begin{equation}
    P = p X_1^{x_1} Z_1^{z_1} \dots X_n^{x_n} Z_n^{z_n},
\end{equation} which can be represented as a binary vector of length $2n$ of the form $[x_1 \dots x_n | z_1 \dots z_n]$ and a Boolean phase $p$ (since $P$ must be Hermitian). A Clifford gate $C$ on $n$ qubits is fully determined by a set of $n$ stabilizers $s_1, \dots, s_n$ and a set of $n$ destabilizers $d_1, \dots, d_n$, where each (de)stabilizer is a Pauli operator \cite{aaronson2004improved, Yoder2012AGO, Yoganathan2019QuantumAO} satisfying
\begin{align}
    \left[ s_i, s_j \right] &= \left[ d_i, d_j \right] =  0 \text{ for any $i,j$, and }  \\ 
    \left[ s_i, d_j \right] &= \left\{ s_i, d_i \right\} = 0 \text{ for $i \neq j$}. 
\end{align} Hence, a Clifford can be represented as a $2n \times 2n$ Boolean matrix called its \textit{stabilizer table} (and $n$ Boolean phases) with each row being the vector of the corresponding (de)stabilizer. The (de)stabilizers of $C$ can be thought of in terms of the action $C$ has on single-qubit operators, i.e.,  $d_i \equiv X_i^C = C X_i C^{\dagger}$ and $s_i \equiv Z_i^C =  C Z_i C^{\dagger}$. The stabilizer table of random Clifford $C$ can be efficiently generated; thus, by searching for all $X_i$, $Z_i$ which are present in $P$, one finds the updated measurement after commuting $C$ past $P$
\begin{equation}
    P^C = C^{\dagger} P C = p\  \Tilde{d}_1^{x_1} \Tilde{s}_1^{z_1} \dots \Tilde{d}_n^{x_n} \Tilde{s}_n^{z_n}
\end{equation}
where $\Tilde{d}_i, \Tilde{s}_i$ are the (de)stabilizers of $C^{\dagger}$.

The update of a Pauli operator $M_j$ by a register entangling gate $U= \exp{(-i\frac{\pi}{4} P_i X_{a_i})}$ is solely dependent on the commutation relations with the GM $P_i$. 
The stabilizer table of the Clifford $U$ is not easily accessible, but one can (group) multiply the vectors corresponding to the Pauli measurements $i P_i$, $M_j$, $X_{a_i}$ if $\{P_i, M_j\} = 0$. 

The stabilizer table representation of two measurements can be further used to check their commutation relations efficiently. We use this to reduce \textsf{MmtsList} to a set of mutually commuting, independent measurements \textsf{FinalList}. Initially, \textsf{FinalList} consists of dummy $Z_i$ measurements with $i=1,\dots,n$.
If a measurement $P_2$ with outcome $\lambda_2$ anti-commutes with a measurement $P_1 \in $ \textsf{FinalList} with outcome $\lambda_1$, then it is replaced by a normalized projector $V = (\lambda_1 P_1 + \lambda_2 P_2)/\sqrt{2}$. The Clifford gate $V$ is commuted to the end of the circuit. The update of a measurement $Q \to \Tilde{Q} = V^{\dagger} Q V $ subsequent to $P_2$ can be turned into (group) multiplications of the Pauli operators depending on the commutation relations of $Q$ with $P_1$ and $P_2$. Conversely, if a measurement commutes with all previous elements of \textsf{FinalList}, then we check whether those measurements fully determine it. This independence check is the bottleneck of the algorithm requiring $\mathcal{O}(t^3)$ time~\cite{peres2022pbc, Ko1991AFA}. Then, independent measurements are appended to \textsf{FinalList}. 

We further separate the measurements from \textsf{FinalList} into sets with disjoint supports $\{ \Tilde{R}_a \}$. One can always ``quotient out'' %
single qubit measurements since they commute with higher-weight support measurements, thus, generating additional sets with one element. Other measurements with small support are not guaranteed to lead to more disjoint sets, but one can still optimize over equivalent \textsf{FinalList}s. Finally, using the optimal choice $\{ R_i \}$, we find the MSR block sizes and $\mathrm{\textsf{CPX}}_{\mathrm{PBC}}$, cf. Eq.~(\ref{eq:PBCcpx}).

\section{Table of symbols}

In Table~\ref{tab:symbols} we collect the majority of symbols used throughout this paper. The Sections and context in which these symbols are used are also highlighted.

\begin{table*}
\caption{\label{tab:symbols}Main symbols used in this work. We indicate the context in which they appear and what they denote.}
\begin{ruledtabular}
\begin{tabular}{ccc}
Context & Notation & Meaning \\
\hline
 & $n$ & number of qubits  \\
 & $D$ & total depth of circuit  \\
Circuit model & $p$ & mid-circuit measurement probability  \\
 & $q$ & $T$ gate probability \\
Sec.~\ref{sec:models} & $t$ & total number of $T$ gates in a circuit  \\
 & $C$ & Clifford gate \\
\\
 & $U (U')$ & (adaptive) Clifford gate inside a $T$ gadget  \\
 & $V$ & normalized projector arising in PBC, which is a Clifford gate  \\
Pauli-based  & $M_i$ or $P_i$ & $i^\mathrm{th}$ measurement operator, i.e., a Hermitian Pauli operator  \\
computation (PBC) & $\mathcal{R}_n (\mathcal{R}_t)$ & register of computational (magic ancilla) qubits  \\
 & $\chi_t$ & stabilizer rank of the state $\ket{A}^{\otimes t}$  \\
Sec.~\ref{sec:CpxProxy}  & $\mathsf{CPX}_\mathrm{PBC}$ & runtime proxy of simulating a circuit by PBC  \\
App.~\ref{app:PBC_Details} & $t_i$ & size of the $i^\mathrm{th}$ partition of the magic state register \\
 & $\log \mathsf{CPX}^\mathrm{(typ)}_\mathrm{PBC} /t$ & simulability order parameter of a typical circuit instance \\
\\
 & $\mathcal{S}, \mathcal{S}_\pm \ (\mathcal{G}$) & stabilizer group of a pure (non-)stabilizer state,  \\
 &   & with generators $s_i \ (g_i)$ and destabilizers $\overline{s}_i \ (\overline{g}_i)$   \\
 & $\mathcal{M}$ & generic magic measure  \\
 & $d$ & depth of $T$ circuit block (TCB)  \\
 & $d^*$ & depth of TCB at which SP occurs  \\
 & $\mathbb{P}(\mathrm{SP}) \ [ \equiv \mathbb{P}(\mathrm{SP}, d) ]$ & probability of a TCB to SP (by depth $d$) \\
 & $p_1 \equiv \mathbb{P}(\mathrm{SP}, d=1)$ & probability of a TCB to SP at depth $d=1$ \\  
Stabilizer- & $\{ M_j,$ & \{ measurement operator on the $j^\mathrm{th}$ qubit,  \\
purification (SP) & $\gamma_j,$ & number of (de)stabilizers potentially entering the expression of $M_j$,   \\
 & $S_\mathrm{vN}(j) \}$ & von Neumann entropy across a cut located (w.l.o.g.) to the left of qubit $j$ \} \\
Sec.~\ref{sec:SPandCartoon} & $f \equiv \mathbb{P}(Z_j \textrm{ SP} |\,\textrm{prev. NSP})$ & probability to SP due to $Z_j$ conditioned on no previous SP event  \\
App.~\ref{app:SPTgate}, App.~\ref{app:bulkMons} & $\mathcal{P} \equiv \mathbb{P}(d^* = k |\,k'<k \, \textrm{NSP})$ & probability to SP at depth $k$ conditioned on no SP at smaller depths  \\
 & $w$ & width of the region containing potentially stabilizer-purifying monitors  \\
 & $\Gamma$ & decay rate of $1 - \mathbb{P}(\mathrm{SP}, d)$ with depth $d$ \\
 & $\tau_\mathrm{SP}$ & stabilizer-purification time; $\tau_\mathrm{SP}=\Gamma^{-1}$  \\
 & $r_\mathrm{exp}$ & expected radius of a spacetime volume occupied by one $T$ gate\\
 & $d_\mathrm{exp}$ & expected depth separation between $T$ gates\\
\\
  & $\mathbb{P}(Z_j)$ or $\mathbb{P}(T_j)$ & probability to apply a $Z$ measurement or a $T$ gate on qubit $j$  \\
$T-$correlated & $p_+ = \mathbb{P}(Z_j \vert T_j)$ & probability to apply a $Z$ measurement conditioned on \\
circuit model & & having applied a $T$ gate on qubit $j$  \\
 & $p_- = \mathbb{P}(Z_j \vert  \mathrm{no } \ T_j)$ & probability to apply a $Z$ measurement conditioned on \\
Sec.~\ref{sec:MonGame} & & \textit{not} having applied a $T$ gate on qubit $j$  \\
 & $\alpha = p_+ - p_-$ & partial knowledge parametrization  \\
\\
 & $\mathbf{p}=(p_0, p_1, p_2)$ & vector of bond probabilities on a honeycomb lattice  \\
Mapping to & $\sigma = 1 - p_0$ & probability of a vertical bond to be broken  \\
spacetime percolation & $\mathrm{rad (CC)}$, $s$, and $d$ & radius, maximal width, and depth of a circuit cluster (CC) \\
 & $\mathsf{CPX}^{(\mathrm{CC})}_\mathrm{TN}$ & runtime proxy for an exact the tensor network (TN) contraction of a CC \\
App.~\ref{app:SpacetimePart} & $\mathsf{CPX}^\mathrm{(avg)}_\mathrm{TN}$ & mean TN runtime proxy of a circuit at fixed $\mathbf{p}$\\
 & $\mathsf{CPX}^\mathrm{(typ)}_\mathrm{TN}$ & typical TN runtime proxy of a circuit at fixed $\mathbf{p}$ \\

\end{tabular}
\end{ruledtabular}
\end{table*}

\section{Additional \texorpdfstring{$\bm{qD=\mathcal{O}(1)}$}{} numerics \label{app:add_qD}}

\begin{figure}[t]
    \centering
        \includegraphics[width=8.6cm]{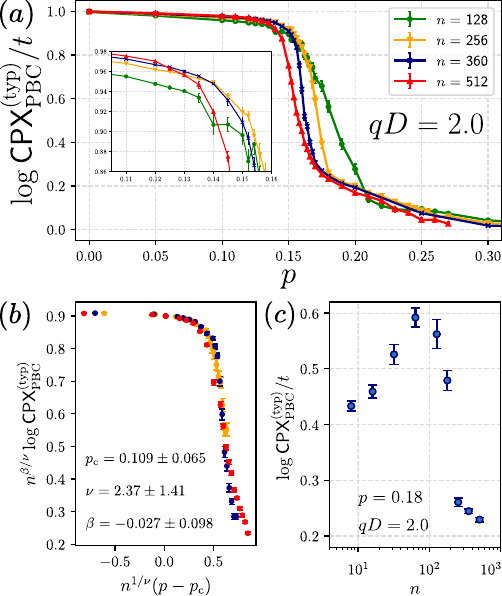}
    \caption{
            Dynamical magic phase transition for $qD=2.0$ consistent with a simultaneous  entanglement  transition. 
            ($a$): The order parameter $\log \mathrm{\textsf{CPX}}^{(\mathrm{typ})}_{\mathrm{PBC}}/t$ (with SE as error bars) versus measurement probability $p$. As $n$ increases, the order parameter approaches a nonzero value for $p < p_\mathrm{c}$ while it approaches zero for $p > p_\mathrm{c}$.
            The inset shows the window in which different $n$ curves cross.
            Significant finite-size effects persist up to at least $n=256$, precluding the accurate extraction of a critical monitoring rate and critical exponents.
            ($b$): Single-parameter
            finite-size scaling collapse, yielding $p_\mathrm{c} = 0.109 \pm 0.065$. 
            ($c$): The order parameter versus system size $n$ at fixed $p=0.18$.
            For intermediate sizes (the data show $n \in \{ 8, 16, 32, 64 \}$) the order parameter seems to increase with $n$; it requires $n\geq 128$  (the data
            show $n \in \{ 128, 180, 256, 360, 512\}$) for the large-$n$ decrease expected for the area-law regime to set in.
            }
    \label{fig:qDapp}
\end{figure}

Here we provide further numerical results for fixed $qD=\mathcal{O}(1)$ in the uncorrelated monitoring model.
We focus on circuits with $qD=2.0$ and depth $D=n$ with $n\leq 512$ qubits. 
Compared with the $qD=0.1$ data, cf. Fig.~\ref{fig:qD_CPX}, we observe strong finite-size effects, cf. Fig.~\ref{fig:qDapp}.
(See the end of this Appendix for an interpretation.)

An indication of the strength of finite-size effects is the drift in curve crossings in Fig.~\ref{fig:qDapp}($a$). 
This makes the accurate extraction of the critical monitoring rate $p_\mathrm{c}$ and critical exponents particularly challenging.

As we did for $qD=0.1$, we start with the scaling ansatz
\begin{align}
    \left[ \log \mathrm{\textsf{CPX}}^{(\mathrm{typ})}_{\mathrm{PBC}}/t \right] (p, n) = n^{-\beta / \nu} G\left( (p-p_\mathrm{c})n^{1/\nu} \right),
\end{align}
where $G$ is a universal scaling function.
Performing a finite-size scaling collapse~\cite{cardyFSS,andreas_sorge_2015_35293,PhysRevLett.28.1516} using this is
shown in Fig.~\ref{fig:qDapp}($b$); the fitted parameters agree within error bars with the $qD=0.1$ results (see Sec.~\ref{sec:CPXtrans}), however these error bars are now considerably larger. 

A possible origin of the drift in crossings, and hence these large fitting error bars, are irrelevant scaling variables. 
For simplicity, we consider incorporating the leading irrelevant variable, leading to the ansatz~\cite{beach2005data,SandvikCorrections,slevinPRL} 
\begin{align}\label{eq:2param}
    \left[ \log \mathrm{\textsf{CPX}}^{(\mathrm{typ})}_{\mathrm{PBC}}/t \right] (p, n) = n^{-\beta / \nu} F( \rho n^{1/ \nu}, u n^{y}),
\end{align}%
where $\rho = p-p_\mathrm{c}$ is the relevant variable (hence $1/\nu>0$) and $u$ is the irrelevant variable (hence $y <0$).
For small values of arguments in $F$, one can recast Eq.~\eqref{eq:2param} as an ansatz with shift and renormalization corrections~\cite{beach2005data}
\begin{align}
    \left[ \log \mathrm{\textsf{CPX}}^{(\mathrm{typ})}_{\mathrm{PBC}}/t \right] (p, n) = &\ n^{-\beta / \nu} (1+cn^{-\omega}) \\
    &\ f\left( (p-p_\mathrm{c})n^{1/\nu} - b n^y \right), \nonumber
\end{align}%
where $b, c$ and $\omega$ are not universal.
Using this ansatz, we find $y = -0.46 \pm 1.96$ consistent with $u$ being irrelevant (the least uncertain $y$ was obtained with fixed $c=0$); however, the accuracy of the $p_\mathrm{c}, \beta, \nu$ estimates does not improve.
Hence, while these corrections should be included, they are still insufficient for an accurate extraction of $p_\mathrm{c}$. 
This limitation, present despite the considerable system sizes accessible to our simulations, is indicative of the strength of finite-size effects for $qD=2.0$.

To further illustrate the strength of finite-size effects, in Fig.~\ref{fig:qDapp}($c$) we plot the $n$-dependence of $\log \mathrm{\textsf{CPX}}^{(\mathrm{typ})}_{\mathrm{PBC}}/t$, focusing on $p=0.18$, i.e., we work in the area-law phase slightly above the entanglement transition. 
$\log \mathrm{\textsf{CPX}}^{(\mathrm{typ})}_{\mathrm{PBC}}/t$ seems to increase for a considerable range of $n$ (up to $n=64$ among the data points shown) before the large-$n$ decrease expected for the area-law phase sets in (as it does for $n\geq 128$ for the data shown). 

The strong finite-size effects can be informally explained by noting that the entanglement-based interpretation of the expected behavior of $\log \mathrm{\textsf{CPX}}^{(\mathrm{typ})}_{\mathrm{PBC}}/t$ rests upon the applicability of the simplified model from Sec.~\ref{sec:SPTime_purifier}. 
This suggests that suppressing finite-size effects requires $d/ \tau_\textrm{SP}$ in Sec.~\ref{sec:SP_AC} to surpass a certain threshold. 
Focusing on the area-law phase, note that for $qD=\eta$ and $D=\mathcal{O}(n)$, we have $r_\text{exp} / \tau_\textrm{SP} = \mathcal{O}(\sqrt{n/\eta})$ for the typical constant-width apex of the CCM. 
[For the atypical $\mathcal{O} (n)$ width apex, we have $d_\text{exp} / \tau_\textrm{SP} = \mathcal{O}(n/\eta)$ if $w_0 = \mathcal{O}(n)$.]
Thus, we find that an $M$-fold increase of $\eta$ implies an $M$-fold increase in the value of $n$ required to suppress finite-size effects. 
In particular, for $qD=2.0$, we would need $20$ times larger systems to study the transition than for $qD=0.1$ to get comparable data to those in Fig.~\ref{fig:qD_CPX}. 
\providecommand{\noopsort}[1]{}\providecommand{\singleletter}[1]{#1}%

\end{document}